\title{Making Proof-of-Work useful: Miners search for Dominating Set instead of Nonce!\thanks{Supported by organization x.}}
\begin{document}
\maketitle              % typeset the header of the contribution
\begin{abstract}
 Hash-based Proof-of-Work (PoW) used in the Bitcoin Blockchain leads to high energy consumption and resource wastage. In this paper, we aim to re-purpose the energy by replacing the hash function with real-life problems having commercial utility. We propose \emph{Chrisimos}, a useful Proof-of-Work where miners are required to find a \emph{minimal dominating set} for real-life graph instances. %Each miner is free to choose any method that outputs a \emph{minimal dominating set} of the given graph instance in polynomial time. We propose a method for estimating the time to generate each block, called block interval time. 
 A miner who is able to output the \emph{smallest dominating set} for the given graph within the block interval time wins the mining game. We also propose a new chain selection rule that ensures
the security of the scheme. Thus our protocol also realizes a decentralized \emph{minimal dominating set} solver for any graph instance. We provide formal proof of correctness and show via experimental results that the block
interval time is within feasible bounds of hash-based PoW.
\end{abstract}

\section{Introduction}
Cryptocurrencies have revamped the Banking and Financial Sector. Parties can execute transactions without relying on any trusted entity. Blockchain forms the backbone of these cryptocurrencies or tokens, and any transaction added to the blockchain is secure and tamper-proof \cite{xiao2020survey}. In permissionless blockchains, anyone can join and partake in the consensus without needing any approval \cite{tasca2017taxonomy}. 

Permissionless Blockchains are prone to Sybil attacks. An attacker creates an artificially large number of fake identities \cite{douceur2002sybil} with the intention of controlling the network. Thus, permissionless blockchains use a Sybil-resistant mechanism called Proof-of-Work (PoW) \cite{nakamoto2008bitcoin} that capitalizes on the computational resource invested by a participant to reach a consensus. PoW uses cryptographic primitives like hash hence the name hash-based PoW. Hash functions are computationally difficult with efficient verification. This makes PoW-based mining computationally expensive. Bitcoin is heavily criticized for the enormous energy consumption needed for mining \cite{9741872}. As per the report in January 2023, \cite{energy}, Bitcoin is estimated to have an annual power consumption of 127 \texttt{terawatt-hours (TWh)} - more than many countries, including Norway. Alternate consensus protocols overcome this problem by quantifying work done based on the utilization of other resources that have no adverse impact on nature. However, an alternate consensus has its own problems. Proof-of-Stake \cite{king2012ppcoin} and Proof-of-Authority \cite{poa} eliminate democracy and move the power to the ``richest in the room''. Proof-of-Space \cite{park2018spacemint} needs more storage space when more miners are added to the network. Proof-of-Spacetime \cite{moran2019simple} requires users to lock up a certain amount of coins in order to store data on the Blockchain, which can create an entry-level barrier for new users. Proof-of-Burn \cite{karantias2020proof} wastes coins as mining power is proportional to the amount of money a participant is willing to burn. 

\textbf{Previous works on useful PoW}. Instead of changing the underlying consensus mechanism, we analyze the literature of Proof-of-Useful-Work (PoUW) \cite{ball2017proofs}. PoUW utilizes the energy of the miners to perform some useful task that has either commercial utility or academic purpose. Such systems generally focus on NP-complete problems. This class of problems perfectly fits into blockchain systems because identifying the solution
in the first place has no known polynomial algorithm but the solution to the problem can be
verified in polynomial time \cite{oliver2017proposal}. Early designs and implementations like Primecoin \cite{king2013primecoin} and Gapcoin \cite{gap} are based on number-theoretic hardness but these protocols do not solve any problem of utility. Further constructions for PoUW mining were given by Loe et al. \cite{loe2018conquering}, and Dotan et al. \cite{dotan2020proofs}. In all these previous approaches, the security of the system was not rigorously analyzed and in many cases, the attacker can manipulate the system by providing easy instances to solve. A hybrid scheme for PoUW based on the artificial instance of the traveling salesman problem (TSP) has been proposed in \cite{syafruddin2019blockchain}, where the utility of artificial TSP is questionable. Another hybrid approach to mining \cite{philippopoulos2020difficulty}  combines hash value calculations with difficulty-based incentives for problem-solving. However, miners can be dishonest, find multiple solutions to a real-life instance, and instead of using the best one, they mine several blocks in a row using all the solutions. 
%Another PoUW protocol named proof-of-search \cite{shibata2019proof} is used as a tool to find approximate solutions to any optimization problem. However, the complexity of the overall proof-of-search protocol is too large, and the system is too restrictive in terms of providing a solution. 

Another PoUW protocol \emph{Ofelimos} \cite{fitzi2022ofelimos} is based on the doubly parallel local search (DPLS) algorithm. DPLS represents a general-purpose stochastic local-search algorithm having a component called the exploration algorithm. However, their approach does not take into account the difficulty of NP-hard problem instances which might lead to unfairness while block generation. A recent work called Combinatorial Optimization Consensus Protocol (COCP) \cite{todorovic2022proof} proposes efficient utilization of computing resources by providing valid solutions for the real-life instances of any combinatorial optimization problem. A miner upon finding the solution to a combinatorial optimization problem sends it to a solution pool controlled by a centralized entity. This is a major drawback because the entity controlling the solution pool could be malicious and tends to favor or sabotage a particular miner. 

\textbf{Choosing dominating set problem}. We address the drawbacks of the state-of-the-art by proposing a novel useful PoW. We do not claim to reduce energy consumption but replace the hash function with a graph-theoretic problem of finding a minimal dominating set. Dominating set of a graph has a lot of applications in real life \cite{sym12111885}. In Wireless Sensor Networks (WSNs), the operator selects a certain subset of nodes called backbone nodes. Backbone nodes form a minimal dominating set of WSNs that reduces the communication overhead, increases the bandwidth efficiency, and decreases the overall energy consumption \cite{asgarnezhad2011connected}.  Online social network sites like Facebook, LinkedIn, and X are among the most popular sites on the Internet. Due to the financial limitation of the budget, it becomes important to effectively select the nodes in such a network to realize the desired goal. Dominating sets have an important role in social networks to spread ideas and information within a group. It can be used for targeted advertisements, and alleviate social problems \cite{wang2014domination}.

\textbf{Contributions}. We summarize the contributions as follows:\\
(i) We propose \emph{Chrisimos}\footnote{In Greek, Chrisimos means useful}, a useful PoW where miners find a minimal dominating set of a real-life graph instance instead of generating nonce for hash. \\
(ii) We use the greedy heuristic for finding a minimal dominating set discussed in \cite{alon2016probabilistic} on publicly available graph datasets to estimate the time taken for mining a block using our proposed PoW. The estimated time corresponding to a graph instance is recorded in a publicly available lookup table. We use this table to estimate \emph{block interval time} for any new graph instance.\\ 
(iii) Any miner returning the smallest dominating set in a given block interval time wins the mining game. This induces competition among miners to return the best solution in the given block interval time. Thus, our protocol simulates a decentralized minimal dominating set solver.\\
(iv) We define a new chain selection rule where instead of choosing the longest chain \cite{nakamoto2008bitcoin}, we choose the chain that has the highest work done. We prove that our chain selection rule guarantees the security of the proposed scheme.\\
(v) We run the greedy heuristic on certain datasets and construct the lookup table for block interval time. We observe that the block interval time is within feasible bounds of hash-based PoW, demonstrating the practicality of our approach. %\pb{if there is space below, prob good to make the subheadings in the section like "Contributions" with a bit more line gap/other format}

 \subsection{Outline}
The rest of the paper is organized as follows: in Section \ref{background} we provide the background needed for comprehending our work, in Section \ref{solution} we provide a detailed description of \emph{Chrisimos}, correctness, and complexity analysis of the scheme are discussed in Section \ref{analysis}, we discuss a new chain selection rule in Section \ref{chain}, in Section 
\ref{security} we discuss the security of our proposed PoW scheme, experimental analysis in Section \ref{experiment} shows that \emph{Chrisimos} adds a block to the chain within a bounded time similar to hash-based PoW, and finally we conclude the paper in Section \ref{conclusion}.
\section{Related Works}
\label{related}
The proof of useful work was first introduced in \cite{ball2017proofs} correlates consensus protocol hardness to solving NP-complete problems of practical interest. A mathematical formalization of the PoUW and its properties has been specified as well. Further constructions for PoUW mining were given by Loe et al. \cite{loe2018conquering}, and Dotan et al. \cite{dotan2020proofs}. In all these previous approaches the security of the system was not rigorously analyzed and in many cases, the attacker can manipulate the system by providing easy instances to solve.

A hybrid scheme for PoUW based on the traveling salesman problem (TSP) has been proposed in \cite{loe2018conquering,syafruddin2019blockchain} but the utility of the approach has not been discussed. PoUW based on machine-learning problems was suggested by \cite{baldominos2019coin} but these systems could be easily attacked. Another hybrid approach \cite{philippopoulos2020difficulty} to mining combines hash value calculations with difficulty-based incentives for problem-solving. The miners get an incentive for solving real-life problems for scientific purposes. However, the protocol is susceptible to an attack where miners find multiple solutions to a real-life instance, and instead of using the best one, they mine several blocks in a row using all the solutions. Another PoUW protocol named proof-of-search \cite{shibata2019proof} is used as a tool to find approximate solutions to any optimization problem. However, the complexity of the overall proof-of-search protocol is too large, and the system is too restrictive in terms of providing a solution. Another PoUW protocol based the doubly parallel local search (DPLS) algorithm named Ofelimos has been proposed in \cite{fitzi2022ofelimos}. DPLS represents a general-purpose stochastic local-search algorithm having a component called the exploration algorithm. The latter is used to produce a set of points in a solution space based on the given input parameters preventing miners from picking easy exploration steps and reducing the complexity of mining. However, they do not take into account the difficulty of NP-hard problem instances which might lead to unfairness while block generation. A recent work called Combinatiorial Optimization Consensus Protocol (COCP) \cite{todorovic2022proof} proposes efficient utilization of computing resources by providing valid solutions for the real-life instances of any combinatorial optimization problem. An unsolved problem present in the instance pool is selected and miners start solving for the given instance. A miner upon finding the solution sends it to a solution pool. The role of the solution pool is to keep all valid solutions for each considered instance and to enable the identification and rewarding of the miner who found the best solution. This is a major drawback because the authors assume that a centralized entity will check the solution pool and return the best solution among all. This assumption may lead to problems if the entity running the solution retrieval module on this pool is malicious and tends to favor a solution from a particular miner.

Our protocol overcomes the drawbacks observed in the state-of-the-art useful Proof-of-Work. Unlike in \cite{fitzi2022ofelimos}, we do not consider pool mining in our solution. Introducing pool of miners will make our system fair as miners will be rewarded for their contribution. As a part of future work, we would explore pool mining for \emph{Chrisimos}.  

\section{Background and Notations}
\label{background}
\subsection{Notations used}
We denote a graph instance as $G(V,E)$ where $V$ is the set of vertices and $E \subseteq V \times V$ are the edges connecting the vertices in the graph $G$, and $n=|V|$. For any vertex $v_i \in V$, $N_i=N(v_i)$ denotes set of neighbors of the vertex $v_i$ where $(v_i,w) \in E \iff w \in N(v_i) $. We denote $\delta=min_{v_i \in V}(N(v_i))$ as the minimum degree of $G$ and $\gamma=max_{v_i \in V}(N(v_i))$ as the maximum degree of $G$. $DS(G)$ denotes the dominating set of the graph $G$ and we define $k$ as the permissible limit of the dominating set as defined in \cite{alon2016probabilistic}. The terms related to Blockchain such as \emph{block} $\mathcal{B}$, \emph{transaction, transaction fee,} \emph{coinbase transaction}, \emph{Merkle root} $h_{MR}$, \emph{miner} $M$, \emph{block reward} and \emph{fork} are defined as follows:

(a) \emph{Block} $\mathcal{B}$: A grouping of transactions, marked with a timestamp, and a fingerprint of the previous block. The block header is hashed to find a proof-of-work, thereby validating the transactions. Valid blocks are added to the main blockchain by network consensus.

(b) \emph{Transaction}: A transaction is a message that informs the Bitcoin network that a transfer of ownership of bitcoins has taken place, allowing the recipient to spend them and preventing the sender from spending them again once the transaction becomes confirmed.

(d) \emph{Transaction Fee}: The amount of Bitcoins (BTC) included in transactions as encouragement for miners to add the transactions to blocks and have them recorded on the blockchain.

(b) \emph{Coinbase transaction}: The first transaction in each Bitcoin block, which distributes the subsidy earned when a miner successfully validates it as well as the cumulative fees for all transactions included in the block. The coinbase transaction effectively creates new bitcoin.

(c) \emph{Merkle root $h_{MR}$}: The Merkle root is the final node in a Merkle tree. It is a hash which includes all other hashes in the tree. If a single hash is altered within a tree, this change will ripple upwards, changing the Merkle root completely.
Merkle roots are used in Bitcoin as efficient commitments to large data sets. The Merkle root of all transaction IDs (txids) in a block is included in the block header, so that if the Merkle root changes, the Proof-of-Work is rendered invalid. This setup ensures that once a block is published, no transaction within the block can be altered

(d) \emph{Miner} $M$: A miner is someone who tries to create blocks to add to the blockchain (the term also refers to a piece of software that does this). Miners are rewarded for their work by the Bitcoin protocol, which automatically assigns new bitcoins to the miner who creates a valid block. This is how all bitcoins come into existence.

(e) \emph{Block reward}: The reward, including the mining subsidy and transaction fee, that miners receive for successfully mining a Bitcoin block.

(f) \emph{Fork}: A change to the Bitcoin protocol that can come as a codebase fork, blockchain fork, hard fork, or soft fork, with one version of the protocol ``forking"" off from another. Forks can also include codebase forks and blockchain forks, though these are not necessarily protocol changes.

A graph $G$ transformed using transaction set $\mathcal{T}$ is denoted as $G_T(V_T,E_T)$ where $|V_T|=2|V|$, $|V|$ denotes cardinality of set $V$ . 
%\subsection{Bitcoin Transactions}
% We state the necessary background required for comprehending our work. If we mention network then it refers to the Bitcoin P2P network and if we mention graph then it refers to the input instances on which the miner needs to find a minimal dominating set. We use the term node (or nodes) to denote a vertex (or vertices) of a graph.

\subsection{Hash-based PoW}
 Bitcoin P2P network uses Adam Back’s Hashcash \cite{back2002hashcash} for PoW. A miner constructs a block that has a transaction and a block header. A difficulty target is specified in the block header that defines the required PoW difficulty to make this a valid block. A miner has to find a nonce value, such that the result of applying the \texttt{SHA256} hash function to a tuple consisting of her public key, the hash of the previous block, the hash of the
current block to be added, and the nonce value, satisfies the difficulty target. The hash function is denoted as $\mathcal{H}$. Finding the nonce is akin to rolling a many-sided die and getting the result. 
Given the one-way property of hash functions, the only strategy for finding the valid nonce is to repeatedly try randomly-generated nonce until one of them solves the problem. At least one miner must have found the desired nonce whenever a block gets added to the chain \cite{antonopoulos2014mastering}. On average, it takes about ten minutes to construct a block in the Bitcoin network. The difficulty target is defined as the number of zeros that are required for the hash to be a valid solution and it is updated regularly to ensure that the time to construct a block remains nearly equal to ten minutes. Miners are rewarded with the block mining fee upon finding the correct nonce.

%Since mining takes a nontrivial amount of computation and the use of specialized hardware, it may not be possible for an individual miner to find the nonce on her own. Usually, miners join any existing mining pool. If any member of the pool succeeds in finding the nonce, others earn a fraction of the block reward based on the computation power invested. The mining pool is administrated by a miner called a pool operator. The pool operator is solely responsible for the distribution of jobs to individual miners as well as for sharing the reward if the pool succeeds in adding a block to the chain \cite{antonopoulos2014mastering}. %This ensures a steady flow of earnings even though the miner solved the problem partially. 

\subsection{Dominating set problem} 

A dominating set $DS(G)$ of a graph $G(V, E)$ can be defined as a subset of vertices $V': V' \subseteq V$, and every vertex in $G$ is either in $V'$ or is adjacent to some vertex in $V'$ \cite{10.1007/978-3-540-30140-0_19}.

A set $DS(G)$ is minimal dominating set of the graph $G$ if it does not contain any other dominating set as a proper subset. A dominating set of $G$ of the lowest cardinality is called \emph{minimum dominating set} (MDS). Computing an MDS is NP-hard. The decision version of the problem i.e. \emph{dominating set problem} is NP-complete. It is defined as \emph{``Given a graph $G(V,E)$ and an integer $k$, does $G$ have a dominating set of size less than $k$?''}\cite{garey1979computers}

Since dominating set problem is an NP-complete problem, we provide proof of the existence of a dominating set within a feasible bound for any graph. This proof ensures that any miner will be able to find the solution within the given bound.
\begin{theorem}
\label{th1}
For a graph $G$ with $n$ vertices and minimum degree $\delta$ there exists a dominating set of size less than $k=\frac{n(1+\ln(1+\delta))}{1+\delta}$ \cite{alon2016probabilistic}.
\end{theorem}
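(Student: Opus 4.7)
My plan is to prove this via the probabilistic method, which is the standard route for existence statements of this flavor and matches the citation to Alon. The idea is to construct a random set $A$ by including each vertex $v \in V$ independently with some probability $p$ to be chosen later, and then augment $A$ with a deterministic ``patch'' set $B$ consisting of all vertices that are neither in $A$ nor adjacent to any vertex of $A$. By construction $A \cup B$ is a dominating set for any outcome of the coin flips, so it suffices to show that the expected cardinality of $A \cup B$ is at most $k$; the existence of a dominating set of size $< k$ then follows because some outcome must attain at most the expectation (and we can strengthen to a strict inequality by a standard rounding/epsilon argument or by noting the bound is not attained with equality).

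The main computation is to bound $\mathbb{E}[|A \cup B|] \le \mathbb{E}[|A|] + \mathbb{E}[|B|]$. Clearly $\mathbb{E}[|A|] = np$. For a fixed vertex $v$, it belongs to $B$ only if $v \notin A$ and none of its neighbors are in $A$; since $v$ has at least $\delta$ neighbors, this event has probability at most $(1-p)^{1+\delta}$. Using the standard inequality $1-p \le e^{-p}$ and summing over $v \in V$ gives $\mathbb{E}[|B|] \le n e^{-p(1+\delta)}$, hence
\begin{equation*}
\mathbb{E}[|A \cup B|] \le n p + n e^{-p(1+\delta)}.
\end{equation*}

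I would then choose $p$ to minimize the right-hand side. Differentiating with respect to $p$ and setting the derivative to zero yields $1 = (1+\delta) e^{-p(1+\delta)}$, i.e.\ $p = \frac{\ln(1+\delta)}{1+\delta}$, which lies in $[0,1]$ for all $\delta \ge 0$ and is therefore admissible. Plugging this value back in, the exponential term simplifies to $\frac{1}{1+\delta}$, so the bound becomes
\begin{equation*}
\mathbb{E}[|A \cup B|] \le \frac{n \ln(1+\delta)}{1+\delta} + \frac{n}{1+\delta} = \frac{n(1+\ln(1+\delta))}{1+\delta} = k.
\end{equation*}
Since a random variable cannot always exceed its expectation, there exists an outcome with $|A \cup B| \le k$, giving a dominating set of the desired size.

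The only subtlety worth flagging is the strict-versus-non-strict inequality: the computation above yields ``at most $k$'' rather than ``less than $k$.'' The cleanest fix is to observe that $|A \cup B|$ is integer-valued while $k$ is generically non-integer, so if $k$ is not an integer the bound $|A \cup B| \le \lfloor k \rfloor < k$ is immediate; in the rare case that $k$ happens to be an integer, the inequality $1-p < e^{-p}$ is strict for $p \in (0,1)$, which gives strict inequality in the expectation bound and hence the desired strict bound. No other step presents a real obstacle, since the probabilistic construction and the calculus optimization are both routine.
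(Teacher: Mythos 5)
Your proposal is correct and follows essentially the same argument as the paper: a random subset chosen with probability $p$, augmented by the uncovered vertices, with $\mathbb{E}[|X \cup Y_X|] \le np + ne^{-p(1+\delta)}$ minimized at $p=\frac{\ln(1+\delta)}{1+\delta}$. Your handling of the strict-versus-non-strict inequality is actually more careful than the paper's, which leaves that point implicit.
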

\begin{proof}
The proof is provided in \cite{alon2016probabilistic} and we use it here for justifying the existence of the solution. Given the extended graph $G_T(V_T,E_T)$ with $n'=|V_T|$ vertices and minimum degree $\delta'$, 
 we create a dominating set $X$ of $G_T$ by randomly selecting vertices from $V_T$ with probability $p$. The $E(|X|)=n'p$. Now if we look into the set of vertices in $Y=V_T\setminus X$, then the probability for a vertex $y\in Y$ not to be covered by $X$ is at most $(1-p)^{1+\delta'}$ as the vertex $y$ and all its neighbors (at least $\delta'$ in number) can not be a neighbor of any of the vertices in $X$. We include all such vertices in a set $Y_X$. So, $E(|Y_X|)=n'(1-p)^{1+\delta'}$ Naturally, $X \cap Y_X= \phi$ and $X \cup Y_X$ give a dominating set of $G_T$. Here $|X \cup Y_X|= n'p+n(1-p)^{1+\delta'}$. Now we set the value of $p$ to minimize the size of this dominating set. 
\begin{equation}
\label{oureq}
    |X \cup Y_X|=n'p+n'(1-p)^{1+\delta'} \approx n'p+n'e^{-p(1+\delta')}
\end{equation}
We denote the dominating set of $G_T$ as $DS(G_T)=|X \cup Y_X|$. Taking first-order differentiation of $DS(G_T)$ w.r.t $p$,
\begin{equation}
    \frac{\mathrm{d}DS(G_T)}{\mathrm{d}p}=n'(1-(1+\delta')e^{-p(1+\delta')})=0 \implies p=\frac{\ln(1+\delta')}{1+\delta'}
\end{equation}
Substituting $p=\frac{\ln(1+\delta')}{1+\delta'}$ in Eq.\ref{oureq}, the bound on the dominating set is $\frac{n'(1+\ln(1+\delta'))}{1+\delta'}$.
\end{proof}

Choosing a small bound $k$ on the size of dominating set allows a utility company to get a good solution for a given budget. The lower the size of dominating set, the more effective it is in terms of the cost of implementing a certain objective in the graph instance. 

%Influencers can be targeted to inculcate healthy habits in the graph or for spreading information.

%   \textit{Minimum Dominating Set:} Minimum Dominating set of a graph $G$, is the one which has the minimum cardinality among all the dominating sets of $G$.
% Now, it is known that, finding minimum dominating set of any given graph is NP-Hard.
% Although using probabilistic method, we can find a domination set of $G(V,E)$ of size $d$, where $d < n.\frac{1 + ln(\delta+1)}{\delta + 1}$, $\delta$ being the minimum degree of graph and $|V|=n$. This is the permissible size of dominating set in graph $G$.
% Instead of this specific problem, can we use a class of problems? Can we pin down the exact features why this problem is a good fit. 

\section{Our Proposed Solution}
\label{solution}
% In this section, we describe the protocol where the miner needs to find out the dominating set of a graph instance instead of finding a nonce for a hash.

\subsection{System and adversarial model}
There are three types of participants in our protocol: (i) any utility company supplying a graph instance, (ii) a committee selecting the graph instance, and (iii) miners of the Bitcoin network. Each miner can mine and verify the blocks. By any utility company, we mean it could be any social networking company or company providing telecommunication services. Such networks change quite frequently with time, so there remains a steady flow of input to the Bitcoin network. Utility companies earn high profits by utilizing the dominating sets to realize their objective. Thus, the company shares a portion of its profit as remuneration with the miner who solves the dominating set for the given graph instance. Any utility company submits its graph instances to a common public platform and the identity of the utility company is not revealed. The common public platform is managed by a committee, and we assume that the majority of the members are honest. The committee members checks whether the problem submitted is sufficiently hard by setting a threshold for hardness. They may use the framework proposed in \cite{sym15010140} to estimate the difficulty, and any problem whose difficulty lies below the threshold is rejected. A secret key $sk_C$ is shared among the committee members using distributed key generation protocol \cite{gennaro2007secure} and the corresponding public key is $pk_C$. These members select a graph instance randomly from the pool of problems, assign an identifier and sign the graph. At least \texttt{t-of-n} need to sign the graph instance where $t>\frac{n}{2}$ and the signature scheme used is universally unforgeable \cite{boneh2006strongly}. The identifiers are assigned in increasing order. So the latest graph will have an identifier higher than the previous instances. A representative of the public platform announces the graph instance to the Bitcoin network. Instead of storing the graph instance, miners can download the graph instances from that platform. 

We assume that any adversary has bounded computation power. We consider a synchronous communication model, i.e., if a message is sent at time $t$, it reaches the designated party by $t+\Delta$, where $\Delta$ is the upper bound within which a transaction is confirmed in the blockchain and becomes visible to others.

\subsection{Phases of the protocol}
We define the different phases of \emph{Chrisimos}:\\
(i) \emph{Transaction Selection for Block}: A miner creates a block $\mathcal{B}$ with a set of valid transactions, the coinbase transaction and reward transaction set by the utility company to create set $\mathcal{T}$.\\
(ii) \emph{Preprocessing Phase}: The committee members decide on a graph instance $G$, assigns an id $id_G$, generates hash of the graph $\mathcal{H}(G)$ and signs it using $pk_C$. The network receives a graph $G$, along with the tuple having graph id ${id}_G$, hash of the graph $\mathcal{H}(G)$, minimum degree of the graph $\delta$, block interval time $T^G_{max}$ within which a new block must be added to the chain, and signature on this hash $\mathcal{H}(G)$ as input. The signature on the hash ensures that the graph instance is supplied from the legitimate public platform.
\begin{itemize}
    \item[a.] \emph{Verification of Input}: A miner $M$ verifies the source of the graph and checks the correctness of $G$ by comparing it with the signed hash of the graph. Additionally, it checks if the id ${id}_G$ is higher than the graph id of the previous instance solved and added in the last block $\textrm{prev}_\mathcal{B}$. 
    \item[b.] \emph{Input Transformation}: Given the graph $G$, $M$ transforms it to $G_T(V_T,E_T)$. The transformation is dependent on $\mathcal{T}$ and the hash of $\textrm{prev}_\mathcal{B}$. 
   
\end{itemize}
(iii) \emph{Mining of Block}: $M$ finds the dominating set of this extended graph $G_T$. It adds the dominating set to the block $\mathcal{B}$ and broadcasts it to the rest of the network. \\(iv) \emph{Block Verification}: A verifier checks if the cardinality of the dominating set in $\mathcal{B}$ is less than the solution already stored in the given block interval time. If so, the verifier updates the last stored block to $\mathcal{B}$ provided the dominating set of $G_T$ is valid. Once $T^G_{max}$ expires, the miner adds the last stored block to the blockchain. 
%The miner will ignore this solution if the cardinality of the dominating set is higher than the 

%\s 
%A utility company shares a graph instance with the Blockchain network. The objective of the miners is to find the best possible dominating set for the graph.
We discuss the phases (ii.b.) \emph{Input Transformation}, (iii) \emph{Mining of the Block} and (iv) \emph{Block Verification} in detail. %We also discuss the reward provided to the miner and how a fork in the blockchain is resolved.
%and (iii) Block Reward after $k$ confirmations
% \subsubsection*{(I) Block Mining}
% \sm{Add some connecting statements}

\subsubsection*{Input Transformation}
We propose a rule for the graph transformation such that it is dependent on the transaction set of the block. Since all miners work on the same instance in a block interval time, if a miner has fetched the solution for $G$, any other miner can steal this solution and add it to its block during block propagation. Our proposed transformation ensures no miner can manipulate the given graph instance or earn a reward like a free rider. No two transaction sets of a block are the same due to differences in the coinbase transaction. If a miner $M$ extends graph $G$ to $G_T$ using $\mathcal{T}$, another miner $M'$ will have a different transaction set $\mathcal{T}'$. Transforming $G$ would lead to $G_T' \neq G_T$ and $M'$ has to find a valid dominating set on $G_T'$.  

The transformation involves extending the graph instance $G$ by adding another $|V|$ number of vertices. The new vertices are connected to the graph $G$ based on a rule that depends on the Merkle root formed using $\mathcal{T}$ and the hash of the previous block $\textrm{prev}_{\mathcal{B}}$. The \emph{Extend} function is defined as follows and mentioned in Algorithm \ref{extend}:

\RestyleAlgo{ruled}
\begin{algorithm}[!htbp]
			\caption{{\sf Extend}}
			\label{extend}
			%\fontsmall
   \textbf{Input}: $G,h_1,h_2$\\ 
			$L\leftarrow K(h_1,h_2)$\\
                Sort $V$ in descending order of degree\\
               Introduce $W=\{w_1,w_2,\ldots,w_{|V|}\}$\\
               $j=0$\\
               $E_T=E$\\
                \For{$v \in V$}
                {
                   
                  \For{$i$ in $L[0:|N(v)|-1]$}
                  {
                      \If{L(i) is 1}
                      {
                          Add edge $e$ between $v$ and $w_{j+1}$\\
                          $E_T=E_T\cup \{e\}$
                      }
                  
                  }
                     $j=j+|N(v)|-1$\\
                     $L=L+|N(v)|$\\
                }
                $AdjList_{w}\leftarrow GetList(|V|,\delta)$\\
                $temp_j=j=2$\\
                $k=|W|-1$\\
                \For{$w \in W$}
                {
                    \For{$bit \in AdjList_w[0:k]$}
                    {
                         \If{bit is 1}
                         {
                             Connect $w$ and $w_{j}$ with edge $e$\\
                             Add $e$ to $E_T$\\
                         }
                         $j=j+1$
                    }
                    $AdjList_w=AdjList_w+k+1$\\
                    $k=k-1$\\
                    $temp_j=temp_j+1$\\
                    $j=temp_j$
                }
                $V_T=V \cup W$
                
			\Return{$G_T(V_T,E_T)$}
	\end{algorithm}

\begin{enumerate}[label=(\alph*)]
    \item Miner gets hashes $h_{\textrm{prev}\_\mathcal{B}}$ and $h_{MR}$, where $h_{\textrm{prev}\_\mathcal{B}}$ is the hash of the previous block and $h_{MR}$ is Merkle root formed using the transaction set $\mathcal{T}$. We propose a function $K$ that uses  $h_{\textrm{prev}\_\mathcal{B}}$ and $h_{MR}$ and extends the graph $G$ to $G_T$. The function $K$ is defined as follows: $K: \{0,1\}^{\lambda} \times \{0,1\}^{\lambda} \rightarrow \{0,1\}^{|2E|}$. The output has an equal number of 0's and 1's, and we state the steps for generating the output:
    \begin{itemize}
        \item[(i)] If $2|E|\leq \lambda$, then first $\lambda$ bits of $\mathcal{H}(h_{\textrm{prev}\_\mathcal{B}},h_{MR})$ is the output of $K(h_{\textrm{prev}\_\mathcal{B}},h_{MR})$.
        \item[(ii)] If $2\lambda \geq 2|E|>\lambda$, the output of $K(h_{\textrm{prev}\_\mathcal{B}},h_{MR})$ is defined as follows: first $\lambda$ bits is $\mathcal{H}(h_{\textrm{prev}\_\mathcal{B}},h_{MR})$, denoted as $L_1$, and rest of the $min(2|E|-\lambda,\lambda)$ bits are the $1's$ complement of $L_1$. 
        \item[(iii)] If $2|E|> 2\lambda$, we generate the output of $K(h_{\textrm{prev}\_\mathcal{B}},h_{MR})$ follows: the first $2\lambda$ bits are generated by following steps (i) and (ii), the rest $2|E|-2\lambda$ bits by concatentating the sequence of $2\lambda$ bits for $\lfloor \frac{2|E|}{2\lambda} \rfloor$ times. The last block has the first $|2E|-(2\lambda\lfloor \frac{2|E|}{2\lambda} \rfloor)$ bits from the sequence of $2\lambda$ bits concatenated, so we have an output $L$ of size $2|E|$. Since $L$ has an approximately equal number of 0's and 1's, the number of edges connecting $V$ to $V_T\setminus V$ is $|E|$.
    \end{itemize}

    \iffalse
    \textcolor{red}{There is a pattern after an interval of $2k$. Do we need to argue whether this reduces the hardness of the problem? We will check this later.}
    \fi
    %\item Number the vertices of $G(V,E)$ where, $|V|=n$.
    \item The graph $G$ is represented as a set of $n$ ordered pairs, i.e., $G=\{(v_i,N_i)|v_i \in V \mbox{ and }  N_i= N(v_i)\}, \forall i \in \{1,\dots,n\}, \mbox{where} |N_i| \geq |N_j|, i<j$ where $N(v_i)$ is the set of neighbors of $v_i$. This implies that the vertices of $G$ are sorted in decreasing order of degree, where $v_1$ is the vertex with the maximum degree, followed by $v_2$ whose degree is the second maximum, and this ends with the vertex $v_n$ having the minimum degree.
    \item We introduce $|V|$ new vertices labeled $w_i,\forall i \in \{1,\dots,|V|\}$. These vertices form part of the extended graph $G_T(V_T,E_T)$ where $V_T=2|V|$, but they are assigned to the set $V_T \setminus V$. 
   % \item Include all the edges in $E$ to $E'$.
    \item The miner connects vertices in $V$ and $V_T\setminus V$, where  connections of $w_i$ with neighbours of $v_i, 1 \leq i \leq |V|$ is dictated by specific $|N_i|$ bits of $L$ ranging from $[\sum_{m=1}^{i-1}|N_m|+1, \sum_{m=1}^i |N_m|]$, the connection rule is as follows:\emph{ For $1\leq i \leq |V|$, if the $l^{th}$ bit among these $|N_i|$ bits of $L$ is $1$, then $w_i$ will have an edge with the $l^{th}$ element of the array $N_i$, where $N_i$ has the neighbors of $v_i$ as its element. If the $l^{th}$ bit is $0$, then $w_i$ is not connected to $l^{th}$ neighbor of $v_i$.}

   \item \emph{Procedure GetList($|V|,\hat{\delta}$)}: To connect the vertices $w$ in $V_T\setminus V$, probability that an edge exists between pair of vertices is $p>\frac{\delta}{n}$ where $\delta$ is the minimum degree in $G$. We justify the constraint on $p$ in Lemma \ref{c1}. We choose $p=\frac{\hat{\delta}}{n}$ where $\hat{\delta}=2\delta$ and $\hat{\delta}\in \mathbb{N}$. The miner must therefore generate an adjacency list, where at least $p {n \choose 2}$ edges exist. It is possible for the miner to extend the graph from $G$ to $G_T$ and send it as part of the block header. However, a graph may have a size of a few $MBs$ and in Bitcoin, the block size itself has a size of 1 MB. Blowing up the size of the block header from 80 B to a few MBs will lead to larger-sized blocks and introduce propagation delay. Hence the miner must send minimum information needed to generate graph $G_T$. We define a rule that allows generating a bit-sequence of length ${n \choose 2}$ encoding adjacency list of graph $G_T\setminus G$. So first vertex $w_1$ will have a list encoding with the rest of $(n-1)$ vertices, i.e., $w_2,w_3,\ldots,w_n$, for $w_2$ the list size is $(n-2)$ encoding connections with $w_3,\ldots,w_n$ and so on. A bit 1 denotes the existence of an edge. The rule ensures that the number of 1's in the bits-sequence must be at least $p {n \choose 2}$. We discuss the steps by which $M$ can generate the connections between vertices in $G_T\setminus G$ as follows. 
   \begin{itemize}
   
      % \item Miner computes the merkle root using the transaction set $\mathcal{T}$. Let it be $\bar{h}$.  
      % \item Given $\bar{h}$ is a bit string, calculate the number of $1's$ in $\bar{h}$. Let it be $\gamma$.
       \item The entire bit sequence for connections is divided into chunks, each of the size $\lfloor \frac{n}{\hat{\delta}} \rfloor$ bits. For selecting the first chunk, the rule is :
   \begin{itemize}
     \item If $\hat{\delta}$ is odd, set first $\lfloor \frac{n}{\hat{\delta}} \rfloor-1$ bits as 0s and last bit 1, or $x=(\lfloor \frac{n}{\hat{\delta}} \rfloor-1 )'s 0|1$.
       \item If $\hat{\delta}$ is even, set first bit as 1 and rest $\lfloor \frac{n}{\hat{\delta}} \rfloor-1$ bits as 0s , or $x=1|(\lfloor \frac{n}{\hat{\delta}} \rfloor-1 )'s 0$. 
   \end{itemize}
    The second chunk of $\lfloor \frac{n}{\hat{\delta}} \rfloor$ bits are the mirror image of $x$, the third chunk of $\lfloor \frac{n}{\hat{\delta}} \rfloor$ bits is left cyclic shift of $x$, the fourth chunk is mirror image of third chunk, this goes on till we generate $n \choose 2$ bits. 
    \item The generalized rule is any $(2i+1)^{th}$ chunk is left cyclic shift of  $(2i-1)^{th}$ chunk, and $(2i+2)^{th}$ chunk is mirror image of $(2i+1)^{th}$ chunk till we get $\binom{n}{2}$ bits. Thus we have at least $\frac{\hat{\delta}(n-1)}{2}$ number of 1's in total. 
    \item We generalize the indices having bit $1$ in the bit-sequence depending on whether $\delta$ is even or odd:\\
    (a) Indices that have bit $1$ if $(\hat{\delta}\mod 2=1)$ are $(2m+1)*\frac{n}{\hat{\delta}}$ and $(2m+1)*\frac{n}{\hat{\delta}}+1$ where $0\leq m \leq \lfloor \frac{(n-1)\hat{\delta}-1}{2}\rfloor$.\\
    (b) Indices that have $1$ if $(\hat{\delta}\mod 2=0)$ are $1+2m*\frac{n}{\hat{\delta}}$ and $2(m+1)*\frac{n}{\hat{\delta}}$ where $0\leq m \leq \lfloor \frac{(n-1)\hat{\delta}-1}{2}\rfloor$.
     
   \end{itemize}

\end{enumerate}
The extend function returns the graph $G_T(V_T,E_T)$ where $n'=|V_T|=2|V|$ and $|E_T|=2|E|+p\binom{n}{2}=2|E|+\frac{\hat{\delta}(n-1)}{2}$. We also illustrate the steps of input transformation via an example.

\begin{figure*}[!hbt]
    \centering
   \includegraphics[scale=0.35]{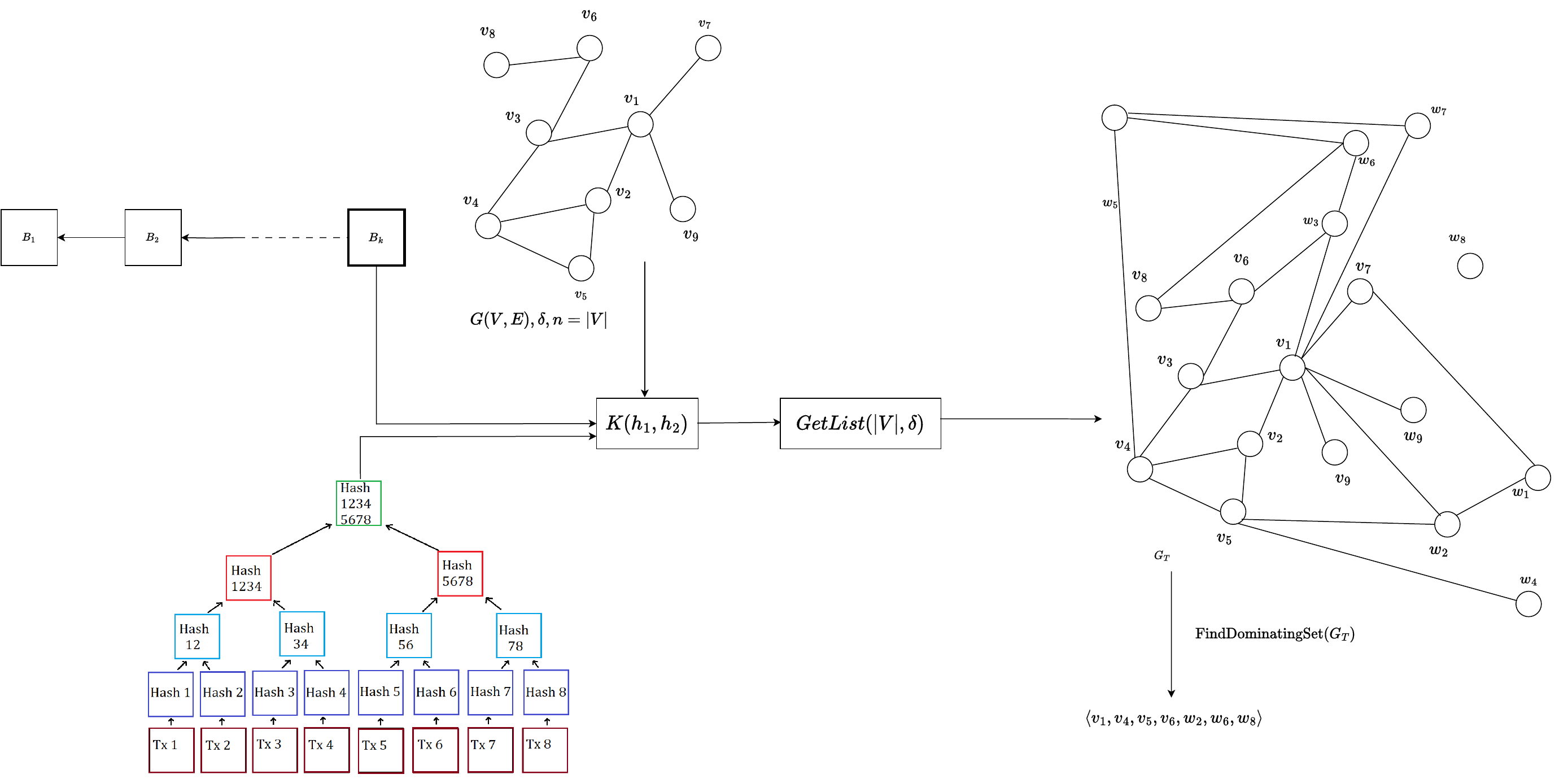}
    \caption{An instance of \emph{Chrisimos} for adding a new block}
    \label{fig:ds}
\end{figure*}
\begin{example}
    We illustrate our proposed mining procedure through Fig.\ref{fig:ds}. The block $B_{k+1}$ uses hash of block $B_k$ and Merkle root $12345678$ for transforming the input instance $G$ with 9 vertices $v_1,v_2,\ldots,v_9$ to $G_T$ with 18 vertices $v_1,v_2,\ldots,v_9,w_1,w_2,\ldots,w_9$. The miner outputs a minimal dominating set of $G_T$ as $\langle v_1,v_4,v_5,v_6,w_2,w_6,w_8 \rangle$. We discuss in detail the steps for transforming the input $G$ to $G_T$, estimating the block interval time $T_{max}^G$, and finally, retrieving a dominating set of $G_T$.
\end{example}

%We also illustrate the steps of input transformation via an example.

% \begin{figure*}[!hbt]
%     \centering
%     \includegraphics[scale=0.37]{dominating_diag.pdf}
%     \caption{An instance of \emph{Chrisimos} for adding a new block}
%     \label{fig:ds}
% \end{figure*}
% \begin{example}
%     We illustrate our proposed mining procedure through Fig.\ref{fig:ds}. The block $B_{k+1}$ uses hash of block $B_k$ and Merkle root $12345678$ for transforming the input instance $G$ with 9 vertices $v_1,v_2,\ldots,v_9$ to $G_T$ with 18 vertices $v_1,v_2,\ldots,v_9,w_1,w_2,\ldots,w_9$. The miner outputs a minimal dominating set of $G_T$ as $\langle v_1,v_4,v_5,v_6,w_2,w_6,w_8 \rangle$. We discuss in detail the steps for transforming the input $G$ to $G_T$, estimating the block interval time $T_{max}^G$, and finally, retrieving a dominating set of $G_T$.
% \end{example}

\subsubsection*{Mining of the Block}
The miner finds the dominating set of size at most the permissible size $k$ within a given block interval time $T_{max}^G$, pre-specified in the lookup table. The function ComputeBound on $G_T$ returns $k$, as defined in Theorem \ref{th1}. The pseudocode is defined in Algorithm \ref{gen_block}.

\RestyleAlgo{ruled}
\begin{algorithm}[!ht]
\SetKwRepeat{Do}{do}{while}
			\caption{{\sf Block Generation}}
			\label{gen_block}
			%\fontsmall
   \textbf{Input}: $G(V,E),{id}_G,H,\sigma_{H},pk_C,\mathcal{T},h_{\textrm{prev}\_\mathcal{B}}$\\
   %start\_time=current\_timestemp\\
   
			\If{ $H \neq \mathcal{H}(G)$ 
                or $SigVerify(\sigma_{H},H,pk_C)\neq 1$ or ${id}_G\leq GetPrevBlockGraphid()$}
                {
                    abort
                }
                Compute $h_{MR}=MerkleRoot(\mathcal{T})$\\
                Compute $G_T \leftarrow Extend(G,h_{\textrm{prev}\_\mathcal{B}},h_{MR})$\\
                $k=ComputeBound(G_T)$\\
                $DS'=V$\\
                
                  %end\_time=current\_timestemp\\
                
                \While{$ |DS'|> k$ and time elapsed $<T_{max}^G$}{
                    $DS(G_T)\leftarrow \textrm{FindDominatingSet}(G_T)$\\
                    \If{$|DS(G_T)| \leq k$}
                    {
                     
                        $DS'\leftarrow DS(G_T)$\\
                    }
                    
                     % end\_time=current\_timestemp\\
                
                }
                \If{time elapsed $<T_{max}^G$ and $ DS' \leq k$}
                {
                    $\textrm{block\_header}=HeaderGen(h_{\textrm{prev}\_\mathcal{B}},h_{MR},DS')$\\
                    $\mathcal{B}\leftarrow BlockGen(\textrm{block\_header},\mathcal{T})$\\
			         \Return{$\mathcal{B}$}
            }
            \Else
            {
                abort
            }
	\end{algorithm}

It broadcasts a block $\mathcal{B}$ containing the transaction set $\mathcal{T}$, the hash of the previous block $h_{\textrm{prev}\_\mathcal{B}}$, Merkle root of the transaction set $\mathcal{T}$ denoted as $h_{MR}$, address for fetching the graph $G$ along with the $id_G$, $\mathcal{H}(G)$, signature on $\mathcal{H}(G)$, degree $\hat{\delta}$, and the dominating set of $G_T$. 
\subsubsection*{Block Verification}
The verifier initializes the variable $past\_size_{DS}=2|V|$. When the miner receives the first block, it checks if the block has a valid dominating set. Cardinality of this dominating set is assigned to $past\_size_{DS}$. Now the verifier will cache this block until it gets a block with dominating set of sizes lower than $past\_size_{DS}$. The verifier will continue to check for new blocks till the $T_{max}^G$ expires. After this, it will accept the last block it had stored. Any solution appearing after $T_{max}^G$ is rejected. Miners who verify the block reach a consensus on the best result in the given epoch and a new block is added to the blockchain. 

When the verifier receives $\mathcal{B}$, it scans for the transaction set $\mathcal{T}$. From the block header, the verifier gets the dominating set $DS({G_T})$ of the extended graph $G_T$, id of the graph $G$ denoted as ${id}_G$, hash $\mathcal{H}(G)$, signature $\sigma_H$ on $\mathcal{H}(G)$, hash of the previous block $h_{\textrm{prev}\_\mathcal{B}}$, and Merkle root of the transaction set $\mathcal{T}$ denoted as $h_{MR}$. It downloads $G$ from the public domain using ${id}_G$, checks if graph $G$ provided is correct and whether ${id}_G$ is greater than the instance id of the graph solved previously. Verifier checks if $\mathcal{T}$ is correct, constructs the Merkle tree using $\mathcal{T}$, and checks the correctness of $h_{MR}$. The solution $DS({G_T})$ can be written as $V_{ds} \cup W_{ds}$. This is because some vertices of graph $G$, denoted as ${V}_{ds}$, and some vertices in $G_T\setminus G$, denoted as $W_{ds}$ will cover the graph.

The verifier checks whether the vertices in $V_{ds}$ cover the rest of the vertices $G$. If not all vertices are covered then the verifier sets the variable \emph{uncovered} to true. Next, the verifier calculates the indices that specify which vertices in $G_T\setminus G$ are connected to vertices in $G$ and checks how many vertices in $V_T\setminus V$ are also covered by $V_{ds}$. To calculate the indices, the verifier generates an output of size $\lambda$ by hashing over inputs $h_{\textrm{prev}\_\mathcal{B}}$ and $h_{MR}$. Let the output be $S$. The verifier can now infer the positions having bit 1 in the bit sequence of size 2$|E|$ if he knows the positions having bit $1$ in the bit sequence of size $\lambda$. We use the method $calc\_index$ on input $S$ for calculating the indices and store it in $Indices_S$. Once the verifier gets the indices, it will know which vertices in $V_T\setminus V$ got covered. We explain the function $calc\_index$ with an example. The verifier has to generate a binary sequence of size $\lambda$ and then use the rule explained in the example to get a bit string of size $2|E|$ with an equal number of 0's and 1's.

\RestyleAlgo{ruled}
\begin{algorithm}[!ht]
			\caption{{\sf Block Verify}}
			\label{verify}
			%\fontsmall
   \textbf{Input}: $\mathcal{B},past\_size_{DS}$ \\ 
			Parse $\mathcal{B}$ to get $h_{\textrm{prev}\_\mathcal{B}},h_{MR},{id}_G,H, \sigma_{H}$, $\delta$, dominating set $DS({G_T})={V_{ds}}\cup {W_{ds}}$ and $G\leftarrow get({id}_G)$ from block header and transaction set $\mathcal{T}$\\
   Set $visited\_set=\phi$\\
   
  \If{ $h_{MR} \neq MerkleRoot(\mathcal{T})$ or $H \neq \mathcal{H}(G)$ 
                or $SigVerify(\sigma_{H},H,pk_C)\neq 1$ or current time $\geq T_{max}^G$ or ${id}_G\geq GetPrevBlockGraphid()$ or ($past\_size_{DS}<|DS(G_T)|)$}
                {
                    reject solution                    
                }
                \Else
                {
                $S\leftarrow \mathcal{H}(h_{\textrm{prev}\_\mathcal{B}},h_{MR})$\\
               $Indices_S \leftarrow calc\_index(M)$\\
                \For{$v$ in ${V}_{ds}$}
                {
               
                    Mark $v$ as visited \\
                  
                         Add $v$ to $visited\_set$\\

                    \For{$v' \in N(v)$}
                    {
                         If $v'$ is not visited then mark it visited \\  
                         Add $v'$ to $visited\_set$\\
                    }
                    \For{ $v'' \in Indices_S(N(v))$}
                    {
                        Mark neighbor $v'' \in V_T\setminus V$ as visited if not visited and add $v''$ to $visited\_set$\\

                    } 
                
                }
                $AdjList_{w}\leftarrow GetList(|V|,\hat{\delta})$\\
                \For{$v$ in $W_{ds}$}
                {

                      Mark $v$ as visited and add $v$ to $visited\_set$\\

                      \For{ $w \in AdjList_w(N(v))$} 
                      {
                         %Add $w$ to $N(v)$ if the bit is 1  \\
                        Mark neighbor $w$ as visited if not visited and add $w$ to $visited\_set$\\

                      }

                      \For{$v$ in $V\setminus visited\_set$}
                      {
                      \For{ $v'' \in Indices_S(N(v))$}
                    {
                        Mark $v''$ as visited if not visited and add $v''$ to $visited\_set$\\
                              %Stop scanning $M$\\      
                    } 
                    %   Start scanning $M$ from the index marking the neighbor set of $v$
                    % {
                    %     \If{bit is 1 for a vertex $v: v \in DS\_{G_T}$}
                    %     {
                                         
                    %     }
                    % } 
                      }
                    }
                    \If{$|visited\_set|=2|V|$}
                    {
                            
                            $past\_size_{DS}=|DS({G_T})|$\\

                    }
                    \Else
                    {
                        reject solution
                    }
                    
                    }

	\end{algorithm}

\begin{example}
For ease of analysis, we consider $\lambda=|H(h_{\textrm{prev}\_\mathcal{B}}, h_{MR})|=10$. If we have \emph{2|E|} of size \emph{200}, and $\mathcal{H}(h_{\textrm{prev}\_\mathcal{B}},h_{MR})$ generates a \emph{10-bit} hash \emph{0101010110}, so at indices \emph{2, 4, 6, 8, 9} the bit is \emph{1}. The next block of \emph{10-bit} is the complement of the first block, and that will be \emph{1010101001}. As per the rule, if the position $i$ in this binary string $\mathcal{H}(h_{\textrm{prev}\_\mathcal{B}},h_{MR})$ is \emph{1}, then all the bits at position $m\lambda+i$ will be \emph{1}, where $m \mod 2=0$ and $m\lambda+i<2|E|, m \in \mathbb{N}, \lambda=10, 1\leq i \leq  \lambda$. So bits at positions \emph{2, 22, \ldots, 182}, \emph{4, 24,\ldots, 184}, \emph{6, 26,\ldots, 186}, \emph{8, 28,\ldots, 188}, \emph{9, 29,\ldots, 189} will be 1. So we get a total of \emph{50} indices where the bit is \emph{1}. We had mentioned that the bits from $\lambda+1$ to $2\lambda$ are the complement of the first $\lambda$ bits. Bit bit positions \emph{1, 3, 5, 7, 10} are \emph{1} in the output $\mathcal{H}(h_{\textrm{prev}\_\mathcal{B}},h_{MR})$. If a bit $i$ is 0 in this string, then the bit at position $z\lambda+i$ will be \emph{1} where $z \mod 2=1$. So bits at positions \emph{11, 31,\ldots, 191}, \emph{13, 33,\ldots, 193}, \emph{15, 35,\ldots, 195}, \emph{17, 37,\ldots, 197}, \emph{20, 40,\ldots, 200} will be \emph{1}. In total, we get \emph{100} indices where the bit is \emph{1}. 
\end{example}

If the variable \emph{uncovered} is true, the verifier checks if the vertices in $W_{ds}$ cover the rest of the uncovered vertices in $G$. After this step, the verifier checks if the remaining uncovered vertices in $V_T\setminus V$ get covered by checking the connection between the vertices in $V_T\setminus V$. A bit-sequence of length ${n \choose 2}$ encoding adjacency list of graph $G_T\setminus G$ has been discussed previously. Since the number of 1's in the bit string is $\hat{\delta} \frac{(n-1)}{2}$, it suffices for the verifier to calculate these indices. If the verifier finds that not all vertices are covered then the solution is discarded. The pseudocode for \emph{block verification} is mentioned in Algorithm \ref{verify}. 
Miners may collude and send out random solutions without doing any work. However, only a single honest
miner per computational task is enough to foil the entire
colluding effort. Colluding miners are not likely to cooperate especially if they know that all they need to do is a relatively small amount of useful work to win the competition
and claim the rewards. The rationale behavior will be to start
competing, and return the best solution. Thus, the mining game induces competition among the miners leading the system to act like a \emph{decentralized minimal dominating set solver}. 

\emph{Block Reward}: A miner who submits a dominating set of $G_T$ having least cardinality within $T^G_{max}$ wins the mining game and gets the block reward. This incentivizes rational miners to compete for finding the best solution. The block reward comprises fee from the transaction set in the block, and the fee provided by the utility company for solving the dominating set of the graph. The lookup table provides an estimate of the block interval time for a graph instance, indirectly providing some insight into the hardness of the problem. We expect a fair utility company to decide on the remuneration directly proportional to the hardness of the problem. In the next section, we discuss how the dominating set for $G$ is retrieved from the dominating set of $G_T$.
 
% One could question whether it be would more efficient to simply solve the problem directly, or release the problem in a marketplace to solve it, rather than trying to provoke competition among many miners. But this does not serve our purpose. Our motive is to combine the two individual problems in one system and realize a useful proof of work consensus. The dominating set problem ensures the security of the Bitcoin Blockchain. Since it is a permissionless network, any participant is free to join or leave anytime they want, and no single entity forces other participants to be a part of it.

\subsection{Retrieval of solution}
A miner finds the dominating set on the extended graph $G_T$ but not on graph $G$. But the utility company wants the dominating set on $G$. It would be wasting coin if the solution for $G_T$ cannot be mapped into a solution of $G$. We observe that $DS(G_T)$ is union of $V_{ds}$ and $W_{ds}$, where $V_{ds}$ are the vertices from $G$. The utility company checks if all vertices in $V_{ds}$ cover $G$. It could also be the case that not all vertices of $V_{ds}$ are required for covering $G$. Then those redundant vertices can be eliminated and the rest are added to $DS(G)$. If $G$ gets fully covered then $W_{ds}$ is discarded. If not all the vertices of $G$ are covered, then one needs to check which vertices in $W_{ds}$ are covering the remaining vertices in $G$. If a vertex $w_i$ is used for covering some vertices of $G$ then corresponding $v_i$ is added to the dominating set $DS(G)$. The reason behind this is that $w_i$ is connected to the neighbor of $v_i$. We provide the pseudocode in Algorithm \ref{ds-g} and justify in Lemma \ref{cc1} that the mapping results in a good solution for $G$ in expectation. 
 
\RestyleAlgo{ruled}
\begin{algorithm}[!ht]
			\caption{{\sf Find $DS(G)$}}
			\label{ds-g}
			%\fontsmall
   \textbf{Input}: Block $B$\\
   Parse block header of $B$ to get $DS(G_T),{id}_G, h_{\textrm{prev}\_\mathcal{B}},h_{MR},\delta$ and retrieve transaction set $\mathcal{T}$ from $B$\\
			Parse dominating set $DS({G_T})={V_{ds}}\cup {W_{ds}}$\\
   Set $visited\_set=\phi$\\
 Set $DS(G)=\phi$\\
                \For{$v$ in ${V}_{ds}$}
                {
               
                    Mark $v$ as visited \\
                  
                         Add $v$ to $visited\_set$\\
                         Add $v$ to $DS(G)$\\
                    
                    \For{$v' \in N(v)$}
                    {
                         If $v'$ is not visited then mark it visited \\  
                         Add $v'$ to $visited\_set$\\
                    }
                \If{$|visited\_set|=|V|$}
                {
                    return $DS(G)$
                }
                $S \leftarrow \mathcal{H}(h_{\textrm{prev}\_\mathcal{B}},h_{MR})$\\
               $Indices_S \leftarrow calc\_index(S)$\\
               \For{$v$ in $V\setminus visited\_set$}
                      {
                      \For{ $w \in Indices_S(N(v))$}
                    {
                        Mark $w$ as visited if not visited and add $w$ to $visited\_set$\\
                        $v'\leftarrow map\_index(w,G)$\\
                        Add $v'$ to $DS(G)$
                              %Stop scanning $M$\\      
                    } 
                    %   Start scanning $M$ from the index marking the neighbor set of $v$
                    % {
                    %     \If{bit is 1 for a vertex $v: v \in DS\_{G_T}$}
                    %     {
                                         
                    %     }
                    % } 
                      }
                    }
                
	\end{algorithm}

%\subsection{Solving Dominating Set Problem}
% \emph{Chrisimos} induces competition among miners to find the best possible dominating set in a given interval. A verifier selects the dominating set of lowest cardinality arriving within the block interval time, the rest of the solutions are discarded.

\subsection{Constructing the lookup table}
In hash-based PoW, a difficulty target is set and adjusted to stabilize latency between blocks \cite{antonopoulos2014mastering}. Robustness of the system is ensured as the computational power of the network varies with miners joining and leaving the network. However, in our protocol, real-life problem instances are submitted to be solved by the miners. The instances may significantly differ in the difficulties. In addition, the actual difficulty of each particular instance may not be known. Thus, it is necessary to estimate the time taken to generate and verify the block, which is the \emph{block interval time},  denoted as $T_{max}^G$.

\subsubsection*{Estimating Block Interval Time}
The block interval time comprises (a) block generation and (b) block verification. The crux of block generation is solving dominating set for the given graph instance. We describe each method and estimate the time taken.

% (a) \emph{Block Generation}: In Algorithm \ref{gen_block}, we have introduced the function \texttt{FindDominatingSet} over the extended graph $G_T$ without defining it. The miners are allowed to use an algorithm of their own choice. A miner needs to find a threshold on the size of the dominating set and based on that, plan its mining strategy. To find a dominating set of size less than $k$, a miner has to do an exhaustive search on the solution space for finding a dominating set in $G_T(V_T,E_T)$ of size $\frac{n'(1+\ln(1+\delta'))}{1+\delta'}$.
% \begin{theorem}
% \label{thresult}
% Finding a dominating set of size $k\leq \frac{n'(1+\ln(1+\delta'))}{1+\delta'}$ using exhaustive search for graph $G_T(V_T,E_T)$, where $n'=|V_T|$ and $\delta'$ is the minimum degree of the graph has time complexity $\mathcal{O}(e^{n'})$.
% \end{theorem}
% % \RestyleAlgo{ruled}
% % \begin{algorithm}[!htbp]
% % 			\caption{{\sf Greedy Heuristic for Dominating Set}}
% % 			\label{greedy}
% % \textbf{Input}: $G_T(V_T,E_T)$\\
% % %start\_time=current\_timestamp\\
% % Set $S_G=\phi$\\
% % Sort $V_T$ in descending order of degree\\

% % \While{$V_T\neq \phi$}
% % {
% %      Select vertex $v$ from $V_T$\\
% %       $S_G=S_G\cup \{v\}$\\
% %       \For{$w \in N(v)$}
%       {
%        $V_T=V_T\setminus \{w\}$
      
%       }
      
%       $V_T=V_T\setminus \{v\}$

% }
% %$\tau=current\_timestamp-start\_time$\\
% \Return{$S_G$}
%    \end{algorithm}

(a) \emph{Block Generation}: In Algorithm \ref{gen_block}, we have introduced the function \texttt{FindDominatingSet} over the extended graph $G_T$ without defining it. We analyze the runtime needed for doing an exhaustive search on the solution space for finding a dominating set in $G_T(V_T,E_T)$ of size $\frac{n'(1+\ln(1+\delta'))}{1+\delta'}$.
\begin{theorem}
\label{thresult}
Finding a dominating set of size $k\leq \frac{n'(1+\ln(1+\delta'))}{1+\delta'}$ using exhaustive search for graph $G_T(V_T,E_T)$, where $n'=|V_T|$ and $\delta'$ is the minimum degree of the graph has time complexity $\mathcal{O}(e^{n'})$.
\end{theorem}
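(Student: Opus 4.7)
The plan is to bound the exhaustive-search running time by separating the number of candidate subsets from the cost of verifying whether any particular candidate dominates $G_T$.

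First I would make the exhaustive search explicit: enumerate every subset $S\subseteq V_T$ of cardinality at most $k=\frac{n'(1+\ln(1+\delta'))}{1+\delta'}$ and test whether $S$ is a dominating set, keeping the first (or smallest) such $S$. The number of candidates visited is therefore
\[
\sum_{i=0}^{k}\binom{n'}{i}\;\le\;\sum_{i=0}^{n'}\binom{n'}{i}\;=\;2^{n'}.
\]
Using this loose bound (rather than the tighter $\binom{n'}{k}\le(en'/k)^k$ estimate) suffices for the claimed $O(e^{n'})$ complexity and keeps the argument short.

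Second I would bound the cost of the domination test. Given a candidate $S$, the miner must verify that every $v\in V_T\setminus S$ has at least one neighbour in $S$. Using adjacency lists this is $O(n'+|E_T|)$, and using the adjacency matrix it is $O(n'^2)$; in either case it is polynomial in $n'$. I would write the test in pseudocode form for clarity so that the constant factor absorbed in the $O(\cdot)$ is unambiguous.

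Combining the two steps, the total running time is
\[
T(n')\;=\;O\!\left(n'^{2}\cdot 2^{n'}\right).
\]
To convert this into the stated form I would observe that $2^{n'}=e^{n'\ln 2}$ and that for any constant $c<1$ we have $n'^{2}\cdot e^{cn'}=O(e^{n'})$, because $e^{(1-c)n'}$ eventually dominates any fixed polynomial in $n'$. Taking $c=\ln 2<1$ gives $T(n')=O(e^{n'})$, which is the desired bound.

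There is no genuine obstacle here; the claim is essentially the standard folklore statement that brute-force search over subsets is single-exponential. The only thing to watch is to be explicit that the polynomial verification cost $n'^{2}$ is absorbed into $e^{n'}$ via the slack between $\ln 2$ and $1$, rather than silently dropped. Noting that $k\le n'$ (so restricting to subsets of size $\le k$ never increases the count beyond $2^{n'}$) closes the argument.
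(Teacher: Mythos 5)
Your proof is correct, but it takes a genuinely different route from the paper's. You bound the number of candidates by the total subset count $\sum_{i\le k}\binom{n'}{i}\le 2^{n'}$, explicitly charge a polynomial cost $O(n'^2)$ for each domination test, and then absorb that polynomial into $e^{n'}$ using the slack $e^{(1-\ln 2)n'}$. The paper instead works directly with the Stirling-type estimate $\binom{n'}{k}\le (en'/k)^k$, substitutes $k=n'/w$ with $w=\frac{1+\delta'}{1+\ln(1+\delta')}$, and shows that $(ew)^{1/w}\le e$ for all $w\ge 1$ (i.e.\ for all $\delta'\ge 1$), so the subset count is at most $e^{n'}$. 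What the paper's calculation buys is a quantitative picture of how the search-space size shrinks as the minimum degree $\delta'$ grows, which fits its broader discussion of instance hardness; what your argument buys is rigor on two points the paper glosses over, namely that an exhaustive search must range over all subsets of size \emph{at most} $k$ (not just size exactly $k$) and that each candidate still costs polynomial time to verify, a factor the paper silently drops. Your observation that the $\ln 2$ versus $1$ gap is exactly what licenses dropping that polynomial is the right way to make the stated $O(e^{n'})$ bound airtight.
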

\begin{proof}
The miners perform an exhaustive search to find the dominating set of size $k\leq \frac{n'(1+\ln (1+\delta'))}{1+\delta'}$, where $\delta'$ is the minimum degree of $G_T$. The following inequality holds from Sterling's second approximation:
$(\frac{n'}{k})^k \leq \binom{n'}{k} \leq (\frac{en'}{k})^k$

    %\item $x^{\frac{1}{x}}$ attains highest value when $x=e$.
When $k$ is the maximum value, the number of subsets of vertices of desired size will be $\binom{n'}{\frac{n'(1+\ln(1+\delta'))}{1+\delta'}}$ and thus we have:
    \begin{equation}
    \begin{split}
    (\frac{1+\delta'}{1+\ln(1+\delta')})^{\frac{n'(1+\ln(1+\delta'))}{1+\delta'}} \leq \binom{n'}{\frac{n'(1+\ln(1+\delta'))}{1+\delta'}} \leq \\ (\frac{e(1+\delta')}{1+\ln(1+\delta')})^{\frac{n'(1+\ln(1+\delta'))}{1+\delta'}}   
       \end{split}
    \end{equation}
 We assign $w=\frac{1+\delta'}{1+\ln(1+\delta')}$ which increases with $\delta'$ and the inequality becomes $w^{\frac{n'}{w}} \leq \binom{n'}{\frac{n'}{w}} \leq (ew)^{\frac{n'}{w}}$, where $(ew)^{\frac{1}{w}}$ attains highest value $e$ when $w=1$. Thus we have,
\begin{equation}
    \begin{split}
    \frac{1+\delta'}{1+\ln(1+\delta')}=1 \implies \delta'=\ln(1+\delta')
    \end{split}
\end{equation}
But this holds true if $\delta'=0$. For any graph, the minimum degree $\delta'\geq 1$ and $(ew)^{\frac{1}{w}}$ monotonically decreases as $w$ increases beyond 1. Thus for any $\delta'\geq 0$, $(ew)^{\frac{n'}{w}} \leq (e)^{n'}$. So the time complexity of the exhaustive search for a dominating set in a graph of order $n'$ is bounded by $e^{n'}$.
\end{proof}

It shows that searching exhaustively would result in an infeasible runtime for block generation. Another paper uses the \emph{Measure and Conquer} approach on NP-hard problem of finding minimum dominating set obtaining a tighter bound $\mathcal{O}(2^{0.598n'})$ \cite{fomin2009measure} but this is still exponential in the size of input. Thus, the miners are allowed to use an algorithm of their own choice that returns a result in polynomial time. To provide an estimate of the runtime for several synthetically generated datasets, we apply a greedy heuristic \cite{alon2016probabilistic} in the module \texttt{FindDominatingSet} of Algorithm \ref{gen_block}.
\RestyleAlgo{ruled}
\begin{algorithm}[!htbp]
			\caption{{\sf Greedy Heuristic for Dominating Set}}
			\label{greedy}
\textbf{Input}: $G_T(V_T,E_T)$\\
%start\_time=current\_timestamp\\
Set $S_G^*=\phi$\\
Sort $V_T$ in descending order of degree\\

\While{$V_T\neq \phi$}
{
     Select vertex $v$ from $V_T$\\
      $S_G=S_G\cup \{v\}$\\
      \For{$w \in N(v)$}
      {
       $V_T=V_T\setminus \{w\}$
      
      }
      
      $V_T=V_T\setminus \{v\}$

}
%$\tau=current\_timestamp-start\_time$\\
\Return{$S_G^*$}
   \end{algorithm}

The greedy heuristic works as follows: select the vertex with maximum degree $\gamma$, say $v_{\gamma}$, as the first element of $DS(G)$, and discard all the neighbors of $v_{\gamma}$ from $V_T$. Repeat the process by selecting the next vertex of maximum degree in $V_T\setminus DS(G)$ until $V_T=\phi$. The heuristic returns a dominating set $DS(G): |S_G^*|\leq |DS(G)| \leq \ln{\gamma} |S_G^*|$ where $S_G^*$ is the minimum dominating set.

\begin{theorem}
\label{th3}
The size of the dominating set fetched by the greedy heuristic is within the bound $\frac{n'(1+\ln(1+\delta'))}{1+\delta'}$ \cite{alon2016probabilistic}.    
\end{theorem}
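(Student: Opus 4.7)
The plan is to run a direct two-phase analysis of the greedy heuristic, aligning the transition point between the phases with the optimal probability $p^\star = \ln(1+\delta')/(1+\delta')$ that was used in the proof of Theorem~\ref{th1}.

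First I would track the number of still-undominated ("white") vertices. Let $W_i$ denote the set of vertices not covered after the greedy has made its first $i$ picks, and set $w_i=|W_i|$ with $w_0=n'$. At step $i$, the algorithm selects a vertex $v$ of maximum residual degree, equivalently one maximising $|N[v]\cap W_i|$. Double-counting pairs $(v,u)$ with $u\in N[v]\cap W_i$ gives
\[
\sum_{v\in V_T}\bigl|N[v]\cap W_i\bigr| \;=\; \sum_{u\in W_i}|N[u]| \;\geq\; (1+\delta')\,w_i,
\]
since every $u\in W_i\subseteq V_T$ has closed neighborhood of size at least $1+\delta'$. Averaging over the $n'$ vertices of $V_T$ produces some $v^\star$ with $|N[v^\star]\cap W_i|\ge (1+\delta')w_i/n'$, and the greedy is at least as effective, so
\[
w_{i+1} \;\leq\; w_i\!\left(1-\frac{1+\delta'}{n'}\right) \;\leq\; w_i\,e^{-(1+\delta')/n'}.
\]

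Second, I would bound the total iteration count. Set $k_1=\lceil n'\ln(1+\delta')/(1+\delta')\rceil$. Unrolling the recurrence,
\[
w_{k_1} \;\leq\; n'\,e^{-k_1(1+\delta')/n'} \;\leq\; \frac{n'}{1+\delta'}.
\]
Beyond this point, each subsequent iteration still removes at least one white vertex from the pool (in the worst case the greedy picks the white vertex itself), so the algorithm terminates after at most $w_{k_1}$ further picks. Combining the two phases,
\[
|S_G^\star| \;\leq\; k_1 + w_{k_1} \;\leq\; \frac{n'\ln(1+\delta')}{1+\delta'} + \frac{n'}{1+\delta'} \;=\; \frac{n'\bigl(1+\ln(1+\delta')\bigr)}{1+\delta'},
\]
which is precisely the required bound.

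The main obstacle I expect is reconciling Algorithm~\ref{greedy} with the "maximum new coverage" selection on which the averaging step implicitly relies: the pseudocode sorts by degree once at the outset, while the argument wants a vertex maximising $|N[v]\cap W_i|$ at every iteration. This gap can be closed by reading the prose above the algorithm ("next vertex of maximum degree in $V_T\setminus DS(G)$") as a dynamic choice in the residual graph, equivalently by re-sorting after each pick; any such implementation still satisfies the averaging inequality. A minor technical point is the ceiling in $k_1$, which shifts the bound by at most one vertex and is absorbed in the additive $1$ of the numerator.
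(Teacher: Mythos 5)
Your proof is correct and follows essentially the same route as the paper's: a double-counting/averaging argument showing each greedy pick shrinks the uncovered set by a factor of $\left(1-\frac{1+\delta'}{n'}\right)$, running for $\frac{n'\ln(1+\delta')}{1+\delta'}$ steps, and then absorbing the at most $\frac{n'}{1+\delta'}$ remaining uncovered vertices directly into the dominating set. Your closing remark about reconciling the one-time sort in Algorithm~\ref{greedy} with the dynamic ``maximum residual coverage'' selection the averaging step requires is a legitimate and worthwhile clarification that the paper's own proof glosses over.
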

\begin{proof}
We choose the vertices for the dominating set one by
one, when in each step a vertex that covers the maximum number of yet uncovered
vertices are picked.  If we pick a vertex $v \in V_T$, it will cover at least $\delta'+1$ vertices (including itself). Let's designate this set of vertices as $C(v)$. 

Suppose we select few such $v$ vertices sequentially and the vertices in the union of corresponding $C(v)$'s are covered. Let the number of vertices that do not lie in this union be $z$. Let's denote such a vertex with $u$. All such $u$ have their corresponding $C(u)\geq(\delta'+1)$. If we take the sum over all such $C(u)$, we get at least $z(\delta'+1)$. This will double count the connections of these $z$ vertices with the other vertices. But these connections can go to at most $n'$ vertices. Using averaging principle we can say that there exists one vertex which is included in at least $\frac{z(\delta'+1)}{n'}$ such $C(u)$ sets. If we select one such vertex out of the $z$ vertices, it will cover at least $\frac{z(\delta'+1)}{n'}$ vertices. The number of uncovered vertices will be at most $z\Big(1-\frac{\delta'+1}{n'}\Big)$. So we can say that at each step of selection, the number of uncovered vertices is reduced by a factor of $\Big(1-\frac{\delta'+1}{n}\Big)$. 

In the Greedy Heuristic, we start with $n'$ uncovered vertices, thus $z=n'$. After selection of $\frac{n'\ln{(\delta'+1)}}{\delta'+1}$ vertices sequentially, the number of uncovered vertices will be,
\begin{equation}
   n'\Big(1-\frac{\delta'+1}{n'}\Big)^{\frac{n'\ln{(\delta'+1)}}{\delta'+1}} < n'e^{-\ln{(\delta'+1)}} =\frac{n'}{1+\delta'}
\end{equation}
Thus after having $k=\frac{n'\ln{(\delta'+1)}}{\delta'+1}$ in the dominating set, we are still yet to select $\frac{n'}{\delta'+1}$ remaining uncovered vertices and form the dominating set. The size of the dominating set will be at most $\frac{n'(1+\ln(1+\delta'))}{1+\delta'}$. 

\end{proof}

This bound is loose and the result returned by the greedy heuristic has a cardinality lower than this. 

(b) \emph{Block Verification}: The verifier checks if the dominating set returned by (a) is valid using Algorithm \ref{verify}.

We use certain benchmark graph datasets and record the time taken to solve the minimal dominating set using the steps stated above in (a) and (b). The block interval time is pre-computed and maintained in a look-up table for all these graph instances. Time taken for solving the dominating set of the extended graph $G_T$ involves extending $G$ to $G_T$, which has a complexity of $\mathcal{O}(|E|)$. The time taken to find the dominating set using greedy heuristic is $\mathcal{O}(|V|)$ and the time taken for verification is again $\mathcal{O}(|V|)$. 
\begin{lemma}
 The time taken for the block generation is $\mathcal{O}(|E|)$   
\end{lemma}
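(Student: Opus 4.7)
\medskip

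\noindent\textbf{Proof proposal.} The plan is to walk through Algorithm~\ref{gen_block} step by step, bound the cost of each subroutine, and show that every contribution is $\mathcal{O}(|E|)$, so their sum remains $\mathcal{O}(|E|)$. The preliminary checks (hash comparison, signature verification, graph--id comparison) are $\mathcal{O}(1)$ in $|E|$, and computing the Merkle root depends only on $|\mathcal{T}|$, which we treat as independent of the graph. The two substantive costs are the call to \texttt{Extend} and the call to \texttt{FindDominatingSet}.

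\medskip

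\noindent First, I would analyse \texttt{Extend} (Algorithm~\ref{extend}). Generating the bit string $L$ by the function $K$ requires producing $2|E|$ bits from hash blocks of length $\lambda$; under the standard assumption that each hash evaluation costs $\mathcal{O}(1)$, this takes $\mathcal{O}(|E|/\lambda)=\mathcal{O}(|E|)$ time. Sorting $V$ by degree is $\mathcal{O}(n\log n)$, but since in any connected instance $n\le |E|+1$, this is absorbed into $\mathcal{O}(|E|\log|E|)$; if one prefers a strict $\mathcal{O}(|E|)$ bound, a bucket sort on degrees works since all degrees lie in $\{0,\dots,n-1\}$. The main loop over $v\in V$ performs a constant amount of work per edge incident to $v$, for a total of $\mathcal{O}(|E|)$ edge insertions between $V$ and $W$. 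Calling \texttt{GetList} produces a bit sequence of length $\binom{n}{2}$ encoding the $W$--$W$ adjacency; however the number of $1$'s inserted is at most $\hat{\delta}(n-1)/2$, and since $\hat{\delta}=2\delta$ and $n\delta\le 2|E|$ (the handshake lemma), we have $\hat{\delta}(n-1)/2=\mathcal{O}(n\delta)=\mathcal{O}(|E|)$. Provided the indices $(2m+1)\cdot n/\hat{\delta}$ etc.\ are generated directly rather than by scanning the full $\binom{n}{2}$ bits, the construction of the $W$--$W$ edges also runs in $\mathcal{O}(|E|)$ time. Consequently $|E_T|=2|E|+\hat{\delta}(n-1)/2=\mathcal{O}(|E|)$ and \texttt{Extend} runs in $\mathcal{O}(|E|)$.

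\medskip

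\noindent Next, I would bound \texttt{FindDominatingSet} (Algorithm~\ref{greedy}) on the graph $G_T$. The greedy heuristic is a single pass that, using a max-heap or bucket structure keyed on residual degree, removes one vertex and all its incident edges per iteration; the total work across all iterations is $\mathcal{O}(|V_T|+|E_T|)$, which by the bound above is $\mathcal{O}(|E|)$. The subsequent \texttt{HeaderGen} and \texttt{BlockGen} calls assemble already-computed values and are $\mathcal{O}(1)$ in $|E|$. The \texttt{while} loop terminates on the first successful attempt under the greedy heuristic because by Theorem~\ref{th3} the heuristic's output already satisfies $|DS'|\le k$, so the loop executes at most once.

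\medskip

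\noindent Summing, every stage is $\mathcal{O}(|E|)$, giving an overall bound of $\mathcal{O}(|E|)$ for block generation. The main obstacle I would watch for is the $W$--$W$ construction: a naive reading of \texttt{GetList} that materialises the whole length-$\binom{n}{2}$ bit string would force an $\Omega(n^2)$ cost and break the claim. The fix is to use the closed-form index formulas (a) and (b) for the $1$-positions, enumerating only the $\mathcal{O}(n\delta)=\mathcal{O}(|E|)$ positions that actually contribute an edge; this is the step that needs to be spelled out carefully to make the $\mathcal{O}(|E|)$ bound tight.
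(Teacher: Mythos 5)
Your proof is correct and follows essentially the same route as the paper's: walk through Algorithm~\ref{gen_block}, charge $\mathcal{O}(|E|)$ to generating and reading the $2|E|$-bit string and wiring $V$ to $W$, touch only the $1$-positions of the $W$--$W$ encoding, and note that the greedy heuristic is dominated by the extension cost. You are in fact more careful than the paper in two places: you justify $\hat{\delta}(n-1)/2=\mathcal{O}(|E|)$ via the handshake lemma (the paper simply calls this step $\mathcal{O}(|V|)$, which implicitly assumes bounded $\delta$), and you explicitly flag that \texttt{GetList} must enumerate the $1$-positions through the closed-form index formulas rather than materialise the full $\binom{n}{2}$-bit string, which would otherwise cost $\Omega(n^2)$ and break the bound --- a point the paper's proof leaves unstated.
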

\begin{proof}
The miner needs to extend the graph $G$ to $G_T$. By analyzing \emph{Extend} function, we observe that the output of $K(h_1,h_2)$ is $2|E|$. The miner needs to read this bit string and connect vertices in $G$ with vertices in $G_T\setminus G$. This is $\mathcal{O}(|E|)$. In the next step, to insert edges with probability $\frac{\hat{\delta}}{n}$ in $G_T\setminus G$, the miner keeps tracks of $\hat{\delta} \frac{n-1}{2}$ indices. This is $\mathcal{O}(|V|)$. Since $|V|<<|E|$ so the time complexity for constructing $G_T$ is $\mathcal{O}(|E|)$.

The miner now finds a dominating set in a graph instance $G_T(V_T,E_T)$ where $|V|=n, E \subseteq V \times V$ using greedy heuristic, defined in Algorithm \ref{greedy}. In each iteration, we pick up the vertex with the maximum degree, check its neighbor, add the vertex to the dominating set and delete it from the graph, and mark its neighbor as visited. Next, we select another unvisited vertex having the highest degree in the residual graph. This continues till no unvisited vertex remains in the graph. The number of iterations in the worst case is the sum of the degree of the vertices in the dominating set. But this just involves exploring all the vertices and thus it is $\mathcal{O}(|V|)$. This is again dominated by the time complexity of extending the graph. Thus the time complexity of Algorithm \ref{gen_block} is $\mathcal{O}(|E|)$.
\end{proof}
\begin{lemma}
 The time taken for the block verification is $\mathcal{O}(|V|)$.
 \end{lemma}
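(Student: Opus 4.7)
The plan is to walk through Algorithm~\ref{verify} phase by phase and bound the cost of each phase against $|V|$, exploiting the fact that the verifier is essentially performing a single BFS-style coverage check over the extended graph starting from the claimed dominating set.

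First, I would dispose of the preliminary sanity checks at the top of the algorithm: the Merkle-root recomputation on $\mathcal{T}$, the comparison $H \stackrel{?}{=} \mathcal{H}(G)$, the call $SigVerify(\sigma_H, H, pk_C)$, the wall-clock check against $T_{max}^G$, the graph-id comparison, and the dominating-set cardinality comparison $past\_size_{DS} < |DS(G_T)|$. None of these touch the graph structure; they are either constant time or linear in the size of the transaction set, which is independent of $|V|$. The single hash call $S \leftarrow \mathcal{H}(h_{\text{prev}\_\mathcal{B}}, h_{MR})$ is also $\mathcal{O}(1)$, and the rule behind $calc\_index$ (as illustrated in the Example preceding the lemma) lets the verifier enumerate the on-positions of the bit-string lazily in constant time per index, so we only pay for indices corresponding to neighbors the verifier actually consults.

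Second, I would analyze the two main coverage loops over $V_{ds}$ and $W_{ds}$. The crucial invariant is that every vertex of $V_T$ can be inserted into $visited\_set$ at most once: the first time it is touched it receives a mark and a set insertion (both $\mathcal{O}(1)$), and on every subsequent touch the ``is it visited?'' query is again $\mathcal{O}(1)$ and no further expansion is performed. Since $|V_T| = 2|V|$, the total number of successful insertions is at most $2|V|$, and the final coverage check $|visited\_set| \stackrel{?}{=} 2|V|$ is itself $\mathcal{O}(|V|)$. Likewise, the construction $AdjList_w \leftarrow GetList(|V|, \hat{\delta})$ produced by \emph{Extend} encodes only $\hat{\delta}(n-1)/2$ one-bits, and the closed-form rule for their positions lets the verifier enumerate them in $\mathcal{O}(|V|)$ time without materializing all $\binom{n}{2}$ candidate bits.

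The main obstacle is to argue that iterating over neighbor lists inside the loops does not push the cost past $\mathcal{O}(|V|)$. My strategy is to charge each inspected adjacency to the unique vertex that is newly inserted into $visited\_set$ through that inspection, so that the ``productive'' work telescopes to at most $2|V|$ charges. The remaining ``redundant'' probes (those landing on already-visited vertices) are bounded by using the structural facts that the $W$-side graph is a near-regular sparse graph of parameter $\hat{\delta}$ and that the $V$-to-$W$ connections are dictated by the lazily enumerated $Indices_S$; together these ensure that the verifier never examines more than a linear-in-$|V|$ number of adjacency positions in total. Summing the contributions from the preliminary checks, the hash and index derivation, the two coverage loops, and the final cardinality test then yields a total verification time of $\mathcal{O}(|V|)$.
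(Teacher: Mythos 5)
Your proof follows essentially the same route as the paper's: the paper's (much terser) argument is simply that the verifier only inspects the neighbors of the dominating-set vertices and that the cost is ``the summation of the degrees of vertices in dominating set, without any double counting of a vertex,'' which is exactly your amortized charge-each-adjacency-to-its-first-insertion-into-$visited\_set$ argument, together with the observation that $|V_T|=2|V|$. The one place you go beyond the paper---explicitly bounding the \emph{redundant} probes to already-visited neighbors, which is the genuinely delicate step since the loops as written still touch every adjacency entry of every dominating-set vertex---is resolved in your write-up by assertion rather than proof, but the paper makes precisely the same leap, so your argument is no weaker (and is rather more honest about where the difficulty lies) than the published one.
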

 \begin{proof}
 Upon analyzing Algorithm \ref{verify}, the verifier just needs to check the neighbors of the vertices in dominating set. It need not fully generate the graph $G_T$, just the information regarding connections of vertices in dominating set is required. The time complexity is the summation of the degrees of vertices in dominating set, without any double counting of a vertex. Thus, the time complexity is $\mathcal{O}(|V|)$.  
 \end{proof}

Thus, for estimating the time taken for adding the block, we take the product of all the components and figure out the scaling factor. We describe the procedure to estimate the block interval time for a new graph instance from the lookup table:\\
(A) We discuss the mining strategy followed by each miner: (i) Miner runs the greedy heuristic for dominating set problem on the extended graph $G_T$, and gets a solution of cardinality $k$. The problem miner tries to solve after the previous step: \emph{Does there exist a dominating set for $G_T$ of size less than $k$?} (ii) Given that the miner has sufficient time to report the solution, it applies all possible methods to obtain a solution having cardinality as low as possible. 
 
Thus, if the greedy heuristic on a benchmark instance takes a run-time $\tau$ units and the size of dominating set is $k': k'\leq k$, the miner can possibly find the dominating set of size $k'$ within time $t: 0<t< \tau$ using another efficient algorithm. If more time is provided, a miner may return a dominating set of cardinality less than $k'$. We set an interval of $l\tau : l \in \mathbb{R}^{+}, l>1$ so that both miner and verifier get enough time to propose and reach a consensus on the new block to be added. \\
(B) When a graph instance $G''(V'',E'')$ with minimum degree $\delta''$ is provided to the network, we check if the vertex count $|V''|$ of the instance matches with an entry of the lookup table. If entry $G'(V',E')$, minimum degree $\delta'$, in the lookup table where $|V'|=|V''|$, then that entry is selected. The edge count $|E'|$ of the entry is recorded and the time taken for proposing a new block for the given graph having edge count $|E''|$ will scale up (or scale down) by factor $\frac{(2|E''|+\delta''(|V''|-1)) \times |V''|}{(2|E'|+\delta'(|V'|-1)) \times |V'|}=\frac{2|E''|+\delta''(|V''|-1)}{2|E'|+\delta'(|V'|-1)}$. The edge count is considered with respect to the extended graphs of both $G'$ and $G''$. \\
(C) If there is no entry that matches the vertex count $|V''|$ then the entry with the maximum vertex count less than $|V''|$ is selected. Let that instance be $G^*(V^*,E^*)$, with minimum degree $\delta^*$, where vertex count of the entry is $|V^*|$ and edge count of the entry is $|E^*|$. Time taken for proposing a new block for the given graph having edge count $|E''|$ will scale up (or scale down) by factor $\frac{(2|E''|+\delta''(|V''|-1)) \times |V''|}{(2|E^*|+\delta^*(|V^*|-1)) \times |V^*|}$. 

\begin{figure}[!htp]
% \centering
\includegraphics[scale=0.35]{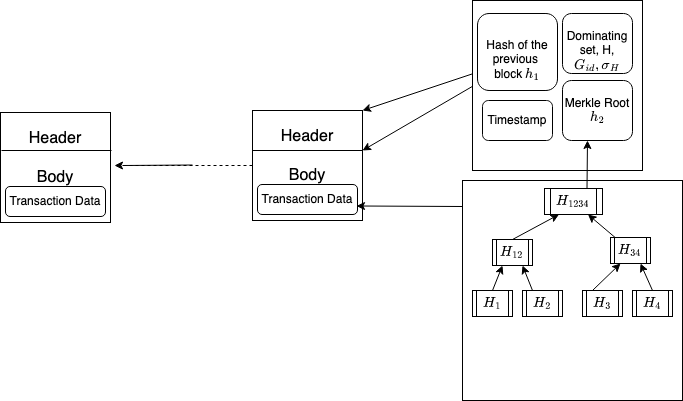}
\caption{Block Structure when \emph{Chrisimos} is used for generating Blockchain}
\label{fig:block}
\end{figure}

The structure of a block added to the blockchain is shown in Fig. \ref{fig:block}. It is similar to the one used for hash-based PoW except for the nonce part. Instead of the nonce, it now contains $id$ of the graph instance, the hash of the graph instance and a valid signature, the minimum degree of the graph, and the dominating set of the extended graph. 

\section{Correctness of Chrisimos}
\label{analysis}
% \label{5}
\begin{lemma}
\label{th4}
The expected number of vertices of $G_T$ covered by the dominating set $DS(G)$ is $\frac{3n}{2}$.
\end{lemma}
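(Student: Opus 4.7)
The plan is to apply linearity of expectation over the $2n$ vertices of $V_T = V \cup W$, splitting the count into a deterministic contribution from $V$ and a random contribution from $W$. Since $DS(G)$ is by definition a dominating set of $G$, every vertex of $V$ is either in $DS(G)$ or adjacent to some vertex of $DS(G)$ along an edge of $G$; because $E \subseteq E_T$, the same edges witness coverage inside $G_T$. Hence all $n$ vertices of $V$ are covered deterministically, contributing $n$ to the count.

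For the $W$-side, I would model the output $L = K(h_{\textrm{prev}\_\mathcal{B}}, h_{MR})$ as a uniformly random bit string, so that each of the $2|E|$ potential $V$-$W$ edges appears independently with probability $\tfrac{1}{2}$; this is the heuristic justified by treating the hash as random. Under the construction, $w_j$ is connected to a random subset of $N_G(v_j)$, the bit-indices $\bigl[\sum_{m<j}|N_m|+1,\sum_{m\le j}|N_m|\bigr]$ of $L$ choosing exactly which neighbors. For a fixed $w_j$, set $c_j = |N_G(v_j) \cap DS(G)|$; then $w_j$ is covered by $DS(G)$ in $G_T$ iff at least one of those $c_j$ bits is 1, giving
\begin{equation}
\Pr[w_j \text{ is covered}] = 1 - 2^{-c_j}.
\end{equation}

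The last step is to evaluate $\sum_{j=1}^{n}(1 - 2^{-c_j})$. I would use the handshake-style identity $\sum_j c_j = \sum_{v \in DS(G)} \deg_G(v)$, and the fact that $DS(G)$ is a dominating set forces $c_j \ge 1$ for every $v_j \notin DS(G)$. In the clean case where the cover is efficient, i.e.\ each $v_j \notin DS(G)$ has exactly one neighbor in $DS(G)$ (and $DS(G)$ is independent), every relevant $c_j$ equals $1$, and one gets
\begin{equation}
\sum_{j=1}^{n}\bigl(1 - 2^{-c_j}\bigr) \;=\; (n-|DS(G)|)\cdot\tfrac{1}{2} \;\approx\; \tfrac{n}{2},
\end{equation}
since $|DS(G)|$ is at most the bound of Theorem~\ref{th1}, which is $o(n)$ for any $\delta \to \infty$. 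Adding the two contributions yields the claim $n + \tfrac{n}{2} = \tfrac{3n}{2}$.

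The main obstacle is this final evaluation: in full generality the $c_j$'s vary and $\sum_j (1-2^{-c_j})$ depends on the local structure of $DS(G)$, so getting exactly $\tfrac{n}{2}$ requires either the efficient-dominating-set interpretation above, or an averaging argument that replaces $1 - 2^{-c_j}$ by its first-order expansion $c_j\ln 2$ and then uses $\sum_j c_j = \sum_{v \in DS(G)}\deg_G(v) \approx n$ for a near-minimum cover. Both routes turn on the same underlying fact, namely that a tight dominating set sees roughly $n$ incidences into $V$, of which about half survive the $\tfrac{1}{2}$-biased bit string to become edges into $W$.
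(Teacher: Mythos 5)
Your proposal follows essentially the same route as the paper's proof: the $n$ vertices of $V$ are covered deterministically, and linearity of expectation with edge probability $\tfrac{1}{2}$ is used to argue that about $\tfrac{n}{2}$ of the $w_j$ are covered, giving $n+\tfrac{n}{2}=\tfrac{3n}{2}$. The difference is one of rigor, and it is worth recording. You compute the exact per-vertex coverage probability $1-2^{-c_j}$ with $c_j=|N_G(v_j)\cap DS(G)|$, and you correctly observe that $\sum_j\bigl(1-2^{-c_j}\bigr)$ equals $\tfrac{n}{2}$ only under an efficient-cover idealization (each dominated vertex having exactly one dominator, and $|DS(G)|=o(n)$). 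The paper's own proof silently makes the same idealization: it sums the expected number of $W$-neighbours of each $v_i\in DS(G)$, i.e.\ $\sum_{v_i\in DS(G)} d_i/2$, and equates this to $\tfrac{n}{2}$, which both presumes $\sum_{v_i\in DS(G)}d_i\approx n$ and ignores double-counting of a $w_j$ adjacent to several members of $DS(G)$ (the quantity it computes is an expected edge count, not an expected number of distinct covered vertices). So the ``main obstacle'' you flag is genuine, but it is an imprecision in the lemma as stated rather than a defect of your argument relative to the paper's; in full generality the result is an approximation or a bound (note also that a $w_j$ with $v_j\in DS(G)$ and no neighbour of $v_j$ in $DS(G)$ has $c_j=0$ and is never covered by $DS(G)$), and your writeup is the more honest account of what is actually being established.
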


\begin{proof}
The given graph $G(V,E)$ has $n$ vertices namely $v_1,v_2,\ldots,v_n$ and we expand $G$ to $G_T(V_T,E_T)$ using our construction technique where we add $n$ new vertices namely $w_1,w_2,\ldots,w_n$. As per our construction, $w_1$ is connected to each neighbor of $v_1$ with probability $\frac{1}{2}$. Now if a vertex $v_i$ has an edge with $v_j$ in $G$, then $w_i$ has an edge with $v_j$ in $G_T$ with probability $\frac{1}{2}$. So if a vertex $v_i$ is connected to $d_i$ number of vertices of $V$ then the expected number of neighbors of $v_i$ in $V_T \setminus V$ will be $\frac{d_i}{2}$. Using the linearity property of expectation it can be said that the number of vertices in $V_T \setminus V$ covered by a dominating set $DS(G)$ of graph $G$ will be $\frac{n}{2}$. So the expected number of vertices of $G_T$ covered by $DS(G)$ will be $(n+\frac{n}{2})=\frac{3n}{2}$.  
\end{proof}

 The expected number of vertices in $V_T \setminus V$ not yet covered by $DS(G)$ is $\frac{n}{2}$. The next theorem gives an upper bound on the size of the dominating set for these $\frac{n}{2}$ vertices. 
\begin{lemma}
\label{th5}
If the probability of an edge in the induced sub-graph of $G_T$ on all the vertices $w_i$ for $i \in \{1,2,\ldots,n\}$ is $p$ then the expected size of the dominating set of these $\frac{n}{2}$ vertices in this induced graph will be $b$ where $b$ is the minimum integer for which $\frac{n}{2}(1-p)^b<1$.
\end{lemma}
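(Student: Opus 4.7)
The plan is to invoke a probabilistic method argument parallel to the one used in Theorem \ref{th1}, but now on the induced subgraph among the $\frac{n}{2}$ uncovered vertices in $V_T\setminus V$, where every pair of such vertices shares an edge independently with probability $p$. Let $W'$ denote this set of $\frac{n}{2}$ vertices.

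First, I would choose a subset $S \subseteq W'$ of size exactly $b$ uniformly at random. For any fixed vertex $w \in W'\setminus S$, independence of the $b$ potential edges between $w$ and the vertices of $S$ gives that $w$ has no neighbor in $S$ with probability $(1-p)^b$. By linearity of expectation, the expected number of vertices in $W'\setminus S$ not dominated by $S$ is at most $\left(\frac{n}{2}-b\right)(1-p)^b \leq \frac{n}{2}(1-p)^b$.

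By the definition of $b$ as the minimum integer satisfying $\frac{n}{2}(1-p)^b < 1$, this expected count is strictly less than $1$. Since it is a non-negative integer-valued random variable, there exists a realization in which it equals $0$; for that outcome, $S$ is itself a dominating set of $W'$ of size $b$, proving the claim. A complementary minimality check — observing that the choice $b-1$ fails the inequality $\frac{n}{2}(1-p)^{b-1}\geq 1$, so one cannot in general hope for a smaller random dominating set by this argument — justifies why $b$ is the tight value stated.

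The subtle step will be reconciling the word \emph{expected} in the lemma's statement with this deterministic-size construction: the cardinality $b$ is fixed once the inequality is solved, so expressing the bound as an ``expected size'' really records the guarantee delivered by the probabilistic argument. An alternative route would be to include each vertex in $S$ independently with probability $q$ (yielding expected size $\tfrac{n}{2}q$) and then optimize $q$ against the expected number of uncovered vertices, but the fixed-size formulation above most directly yields the threshold $b$ stated in the lemma.
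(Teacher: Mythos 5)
Your proof is essentially sound at the level of rigor the paper itself operates at, but it takes a genuinely different route. The paper argues sequentially: each vertex added to the dominating set covers, in expectation, a $p$-fraction of the still-uncovered vertices, so the uncovered count contracts by a factor $(1-p)$ per step and equals $\tfrac{n}{2}(1-p)^b$ after $b$ steps; the threshold $\tfrac{n}{2}(1-p)^b<1$ then signals termination. You instead run a one-shot first-moment argument: fix a size-$b$ set $S$, use independence of the edges to a vertex outside $S$ to get a non-domination probability of $(1-p)^b$, sum by linearity to get an expected undominated count below $1$, and conclude existence of a dominating realization. Your version is cleaner probabilistically --- it avoids the paper's implicit step of feeding the \emph{expected} residual of one round into the next round as if it were the actual residual, which is a heuristic compounding the paper does not justify --- and it parallels the structure of Theorem~\ref{th1} more faithfully. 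What the paper's version buys is a closer match to the greedy selection procedure the miners actually run (cf.\ Theorem~\ref{th3}), which is presumably why it is phrased iteratively. Two caveats apply to both arguments equally: the conclusion is an existence statement over realizations of the random graph, whereas the subgraph on the $w_i$ is generated deterministically by \emph{GetList}, and neither argument literally establishes the ``expected size'' wording of the lemma; you are right to flag the latter, and your reading of $b$ as the guarantee delivered by the threshold inequality is the correct reconciliation.
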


\begin{proof}
If we select a random vertex among this $\frac{n}{2}$ vertices for inclusion in its dominating set, then the expected number of vertices it will cover is $(\frac{n}{2}-1)p+1 \approx \frac{np}{2}$. So the remaining vertices to cover will be $(\frac{n}{2}-\frac{np}{2})=\frac{n}{2}(1-p)$. So selecting a vertex in the dominating set will reduce the size of the set of vertices to cover by a factor of $(1-p)$. Similarly after selecting $b$ such vertices in the dominating set the remaining number of vertices to cover will be $\frac{n}{2}(1-p)^b$. When $\frac{n}{2}(1-p)^b<1$ then there will be no vertex left to cover. From this inequality, we get the minimum $b$ which is also the cardinality of the dominating set of these $\frac{n}{2}$ vertices. 
\end{proof}

\begin{lemma}
\label{c1}
If $DS(G)$ covers $\frac{3n}{2}$ vertices of $G_T$, where $|DS(G)|\leq \frac{n(1+\ln(1+\delta))}{1+\delta}$, and remaining $\frac{n}{2}$ vertices in $G_T-G$ is covered by a dominating set of size $b: \frac{n}{2}(1-p)^b<1$, then the probability of an edge connecting vertices in $G_T-G$ must be greater than $\frac{\delta}{n}$.
\end{lemma}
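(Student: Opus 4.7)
The plan is to construct an explicit dominating set of $G_T$ from the two hypotheses, apply Theorem~\ref{th1}'s global bound to $G_T$, and then invert the resulting inequality on $b$ into a lower bound on $p$.

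First I would combine the hypotheses to exhibit a dominating set of $G_T$ of size $|DS(G)| + b$: by the preceding Lemma~\ref{th4}, $DS(G)$ already covers $3n/2$ vertices of $G_T$, and by Lemma~\ref{th5} the remaining $n/2$ vertices inside $V_T \setminus V$ are covered by an additional $b$ vertices drawn from the induced random subgraph on $W$. Hence $G_T$ admits a dominating set of cardinality at most $\frac{n(1+\ln(1+\delta))}{1+\delta} + b$.

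Next I would apply Theorem~\ref{th1} to $G_T$ itself. Since $G_T$ has $n' = 2n$ vertices and, by the construction in \emph{Extend}, its minimum degree $\delta'$ is at least $\delta$ (each $v_i \in V$ retains all its neighbors from $G$, while the new edges to and within $W$ only increase degrees), Theorem~\ref{th1} guarantees a dominating set of size at most $\frac{2n(1+\ln(1+\delta))}{1+\delta}$. Forcing the explicit construction above to fit inside this ceiling gives $b \leq \frac{n(1+\ln(1+\delta))}{1+\delta}$.

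Now I would extract the lower bound on $p$ from the hypothesis of Lemma~\ref{th5}. The inequality $\frac{n}{2}(1-p)^b < 1$ rearranges to $b > \ln(n/2)/(-\ln(1-p))$, and for the small values of $p$ relevant here, $-\ln(1-p) \approx p$, so $b > \ln(n/2)/p$. Combining this with the upper bound on $b$ derived above yields $p > \frac{(1+\delta)\ln(n/2)}{n(1+\ln(1+\delta))}$, which after absorbing the slowly varying logarithmic factors reduces to the stated bound $p > \delta/n$.

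The main obstacle will be making the final simplification fully rigorous: the algebra naturally produces a lower bound of order $\frac{\delta \ln(n/2)}{n \ln \delta}$, so reconciling this with the cleaner statement $p > \delta/n$ either relies on an implicit asymptotic regime where $\ln \delta$ and $\ln n$ are of the same order, or on a more direct structural argument---namely, requiring that the expected degree $(n-1)p$ in the induced random subgraph on $W$ be at least $\delta$ so that the subgraph on $W$ structurally mirrors the density of $G$, which immediately yields $p > \delta/n$.
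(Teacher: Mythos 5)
Your fallback is the paper's actual proof, and your primary route is not. The paper derives $p \ge \delta/n$ purely structurally: after the extension, a vertex $v_i$ of degree $d_i$ has expected degree $\tfrac{3d_i}{2}$ in $G_T$ while $w_i$ has expected degree $\tfrac{d_i}{2}+p(n-1)$; requiring that the minimum-degree vertex of $G_T$ still come from $V$ gives $\tfrac{\delta}{2}+p(n-1)\ge\tfrac{3\delta}{2}$, i.e.\ $p\ge\tfrac{\delta}{n-1}\approx\tfrac{\delta}{n}$. That is exactly the ``more direct structural argument'' you relegate to your last paragraph (your condition $(n-1)p\ge\delta$ is the same inequality). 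The covering computation $\tfrac{n}{2}(1-p)^b<1\Rightarrow b>\ln(n/2)/\ln\tfrac{1}{1-p}$ appears in the paper too, but only as a consistency check: at $p=\delta/n$, under the standing assumption $\delta<\tfrac{n}{2e}$, the forced value of $b$ is compared against the gap $\tfrac{2n(1+\ln(1+\frac{3}{2}\delta))}{1+\frac{3}{2}\delta}-\tfrac{n(1+\ln(1+\delta))}{1+\delta}$ between the Theorem~\ref{th1} budgets for $G_T$ and $G$; it is not where the bound $\delta/n$ comes from.

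Your primary route has two concrete problems. First, the step ``the minimum degree of $G_T$ is at least $\delta$, so Theorem~\ref{th1} gives a ceiling of $\tfrac{2n(1+\ln(1+\delta))}{1+\delta}$'' is circular: the $V$-vertices indeed keep their neighbours, but the $w_i$ are \emph{new} vertices whose expected degree is $\tfrac{d_i}{2}+p(n-1)$, which falls below $\delta$ precisely when $p$ is too small. Guaranteeing that the $W$-side does not drag the minimum degree down is the whole point of the constraint on $p$, so you cannot assume it while deriving that constraint. Second, even granting the ceiling, the inequality you obtain, $p>\tfrac{(1+\delta)\ln(n/2)}{n(1+\ln(1+\delta))}$, is a different (and, for $\delta<n/(2e)$, strictly stronger) bound than $\delta/n$, and you correctly flag that it does not reduce to $\delta/n$ without further assumptions. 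The submission is therefore salvaged only by its final sentence: promote the structural degree argument to the main proof, state it with the paper's $\tfrac{\delta}{2}+p(n-1)\ge\tfrac{3\delta}{2}$ bookkeeping, and keep the $b$-versus-budget computation as the secondary consistency check the paper intends it to be.
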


\begin{proof}
We have $DS(G): |DS(G)|\leq \frac{n(1+\ln(1+\delta))}{1+\delta}$, that covers $\frac{3n}{2}$ vertices of $G_T$. Given the expected size of the dominating set of the remaining $\frac{n}{2}$ vertices of $G_T$ is $b$ and $DS(G_T)$ is the dominating set of $G_T$, $|DS(G_T)|$ must be less than $(|DS(G)|+b)$ for getting the dominating set within the bound stated in Theorem \ref{th1}. Therefore, $|DS(G_T)|\leq |DS(G)|+b \implies b \geq |DS(G_T)|-|DS(G)|$.

%Let $G(V,E)$ be the graph of our interest. $|V|=n, |E|=e$. We want a dominating set $D$ of $G$.
To construct $G_T$, we extend $G$ to $G_T(V_T,E_T)$ where $V_T=2n$ and $V\subset V_T, E\subset E_T$. For edges in $G_T-G$, an edge exists between a given pair of vertices $w_i$ and $w_j: w_i,w_j \in V_T\setminus V$ with probability $p$. It is observed that due to the extension procedure, the expected degree of a vertex $v_i \in G$ increases from $d_i$ to $\frac{3d_i}{2}$ and $DS(G)$ covers $\frac{n}{2}$ new vertices of $G_T$. Also, the expected degree of a vertex $w_i \in V_T\setminus V$ becomes $\frac{d_i}{2}+p(n-1)$. The size of our desired dominating set depends on the minimum degree of the graph. Finding the dominating set of the graph $G$ is our main goal and to integrate it into the Blockchain structure we extend $G$ to $G_T$. We expect all the important factors of our system to be components of $G$ and thus we set $p$ so that the vertex with a minimum degree in $G_T$ comes from set $V$. 
%So it is expected that the following inequality holds true and gives us a choice for the value of $p$.
%Let $G(V,E)$ be the graph of our interest. $|V|=n, |E|=e$. We want a dominating set $D$ of $G$. First we extend it to $G_T(V_T,E_T)$ where $V_T=2n$ and $V\subset V_T, E\subset E_T$. For edges in $G_T$, we connect the new $n$ vertices with the old $n$ vertices using the method described in the previous section. For edges between new vertices, we will fix a probability $p$ for edges. The value of $p$ should be specified accordingly later. It is observed that due to the extension procedure, the expected degree of a vertex $v_i \in G$ increases from $d_i$ to $\frac{3d_i}{2}$ and $D$ covers $\frac{n}{2}$ new vertices of $G_T$. Also the expected degree of a vertex $w_i \in V_T\setminus V$ becomes $\frac{d_i}{2}+p(n-1)$. To make the constraint on the size of the dominating set dependent on some vertex $v_i$ instead of some dummy vertex $w_i$ the vertex with minimum degree in $G_T$ must belong from $V$. So,
The following inequality $\frac{\delta}{2}+p(n-1) \geq \frac{3}{2}\delta \implies p \geq \frac{\delta}{n-1} \approx \frac{\delta}{n} (\because n>>\delta)$ gives the lower bound on $p$. This $p$ decides the probability with which vertices in set $V_T\setminus V$ must be interconnected. The miners will find a dominating set $DS(G_T)$ of $G_T$ where $|DS(G_T)| \leq \frac{2n(1+\ln(1+\frac{3\delta}{2}))}{1+\frac{3\delta}{2}}$. Additionally, we need a dominating set $DS(G)$ of $G$ where $|DS(G)| \leq \frac{n(1+ln(1+\delta))}{1+\delta}$. From Lemma \ref{th5}, we have,
% So, we need to get $D$ from $D_T$. Now if the size of the dominating set of the induced subgraph of $G_T$ on $\frac{n}{2}$ uncovered vertices by $D$ is greater than $D_T - D$ then we can get $D$ of our desired size from $D_T$. This sub graph has $\frac{n}{2}$ total vertices with probability of each edge as $p$. While we select an vertex in the dominating set for this subgraph, we cover $p$ times the total number of vertices to cover till that point. So, $(1-p)$ part of the vertex set remains to cover. So, it can be said that if $k$ is the size if the minimum dominating set of this graph then,
\begin{equation}
\begin{split}
    \frac{n}{2}(1-p)^b<1 \implies (1-p)^b<\frac{2}{n}  \implies b\ln{(1-p)}<\ln{\frac{2}{n}}   \implies b\ln{\frac{1}{1-p}}>\ln{\frac{n}{2}}
  \implies \\ b>\frac{\ln{\frac{n}{2}}}{\ln{\frac{1}{1-p}}} = \frac{\ln{\frac{n}{2}}}{\ln{(1+p+p^2+\dots)}} > \frac{\ln{\frac{n}{2}}}{p+p^2+p^3+\dots} = \frac{\ln{\frac{n}{2}}}{p(1+p+p^2+\dots)} 
  \end{split}
\end{equation} 
So, $|DS(G_T)|-|DS(G)| \leq \frac{2n(1+\ln(1+\frac{3\delta}{2}))}{1+\frac{3\delta}{2}}-\frac{n(1+ln(1+\delta))}{1+\delta} \leq b$. Substituting $b=\frac{\ln{\frac{n}{2}}}{p(1+p+p^2+\dots)}$ in the above equation, it is sufficient to show,
\begin{equation}
\label{eql}
  \frac{\ln{\frac{n}{2}}}{p(1+p+p^2+\dots)}    \geq \frac{2n(1+\ln(1+\frac{3\delta}{2}))}{1+\frac{3\delta}{2}}-\frac{n(1+ln(1+\delta))}{1+\delta} 
\end{equation}

The L.H.S of the Eq. \ref{eql} decreases with an increment in $p$. We need to check whether the above inequality holds upon substituting $p=\frac{\delta}{n}$. 

$\forall x\geq 1, \frac{ 1+\ln(1+x)}{1+x}<\frac{1+\ln x}{x}$, so we have,
\begin{equation}
\begin{split}
     \tfrac{(1-p)\ln{\tfrac{n}{2}}}{p}\geq (\tfrac{n}{\delta}-1)(\ln{\tfrac{n}{2}}) > \frac{n(1+\ln(\delta))}{\delta}>  \frac{2n(1+\ln(1+\delta))}{1+\delta}-\\\frac{n(1+ln(1+\delta))}{1+\delta}> \frac{2n(1+\ln(1+\frac{3\delta}{2}))}{1+\frac{3\delta}{2}}-\frac{n(1+ln(1+\delta))}{1+\delta}
\end{split}
\end{equation}

%   \end{split}
%   \end{equation}
% %   \begin{equation}
%   \begin{split}
%    % \implies  (\tfrac{n}{\delta}-1)(\ln{\tfrac{n}{2}})> \frac{4n(1+\ln(\frac{3\delta}{2}))}{3\delta}-\frac{n(1+ln(\delta))}{1+\delta}\\
%      > >\frac{n(1+\ln(1+\delta))}{1+\delta}\\
% \end{split}
% \end{equation}
% Since $\delta \geq 1$, $2\delta\geq \delta+1$, so we can write,
% \begin{equation}
% \begin{split}
%       (\tfrac{n}{\delta}-1)(\ln{\tfrac{n}{2}})> \frac{nln \delta}{1+\delta} \geq \frac{nln \delta}{2\delta}>(\frac{n}{\delta}-1)\frac{ln \delta}{2}
% \end{split}
% \end{equation}
 which is true as $\frac{n}{\delta}-1\approx \frac{n}{\delta}$ and $\delta <\frac{n}{2e}$. It is fair enough to consider $\delta < \frac{n}{2e}$. If the minimum degree is greater than this, then it becomes easier to find dominating sets for graphs and that would not be provided as input instances to the Blockchain for mining.
\end{proof}

% \subsection{Correctness of Output}
% \label{correct}
% \begin{lemma}
% \label{cf1}
% The dominating set for $G_T(V_T,E_T)$ is within the bound stated in Theorem \ref{th1}.   
% \end{lemma}

% \begin{proof}
%     To verify the PoW solutions one peer need to check whether the solution provided is the dominating set of $G_T$. The peers don’t need to be sure that the dominating set found by a miner is the minimum dominating set in the graph. She can only choose the smallest one she has received in the epoch and can accept the corresponding block. An honest miner will check whether the dominating set for $G_T$ follows the bound mentioned in Theorem \ref{th1}. Additionally, a rational miner will use her maximum computation power to ensure that the solution is not just satisfying the bound but has a low cardinality. It is highly unlikely that miners collude and submit a bad solution (i.e. exceeding the bound). Even if there exists one miner that provides a better solution than the rest, he gets rewarded. 
% \end{proof}
\begin{lemma}
\label{cf1}
The dominating set for $G_T$ is within the bound stated in Theorem \ref{th1}.   
\end{lemma}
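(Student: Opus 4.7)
The plan is to assemble a dominating set of $G_T$ from two pieces already analyzed in the preceding lemmas, and then verify that the resulting cardinality matches the bound obtained by applying Theorem \ref{th1} directly to $G_T$. Concretely, I would start with a dominating set $DS(G)$ of $G$ whose size is at most $\frac{n(1+\ln(1+\delta))}{1+\delta}$ by Theorem \ref{th1}, and by Lemma \ref{th4} this set already dominates in expectation $\frac{3n}{2}$ vertices of $G_T$, so only about $\frac{n}{2}$ vertices in $V_T\setminus V$ remain uncovered.

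Next, I would apply Lemma \ref{th5} to the induced subgraph on the uncovered vertices, using the edge probability $p = \frac{\hat{\delta}}{n}$ enforced by the construction and justified in Lemma \ref{c1}. This yields an additional set $B\subseteq V_T\setminus V$ of size $b$ with $\frac{n}{2}(1-p)^b<1$, i.e.\ $b = O\!\left(\frac{\ln n}{p}\right)$. The union $DS(G_T) := DS(G)\cup B$ is then a valid dominating set of $G_T$ with $|DS(G_T)|\le |DS(G)|+b$.

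The remaining step is to identify $\delta'$, the minimum degree of $G_T$. From the extension rule, the expected degree of $v_i\in V$ in $G_T$ is $\frac{3d_i}{2}$, while the expected degree of $w_i\in V_T\setminus V$ is $\frac{d_i}{2}+p(n-1)\ge \frac{3d_i}{2}$ by the choice of $p$ in Lemma \ref{c1}. Hence $\delta'\ge \frac{3\delta}{2}$, so the bound that Theorem \ref{th1} gives for $G_T$ (with $n'=2n$) is $\frac{2n\,(1+\ln(1+\tfrac{3\delta}{2}))}{1+\tfrac{3\delta}{2}}$. I would then close the argument by showing
\[
\tfrac{n(1+\ln(1+\delta))}{1+\delta}+b \;\le\; \tfrac{2n\,(1+\ln(1+\tfrac{3\delta}{2}))}{1+\tfrac{3\delta}{2}},
\]
which is exactly the inequality that was already verified inside the proof of Lemma \ref{c1} (for $\delta<\tfrac{n}{2e}$, substituting $p=\delta/n$ into the bound on $b$).

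The main obstacle I expect is bookkeeping rather than mathematics: ensuring that the high-probability / expectation statements from Lemmas \ref{th4} and \ref{th5} are combined cleanly so that the final estimate really conforms to the $\frac{n'(1+\ln(1+\delta'))}{1+\delta'}$ template of Theorem \ref{th1} applied to $G_T$, and justifying the step $\delta'\ge\frac{3\delta}{2}$ as a deterministic lower bound (or at least in expectation, consistent with how Theorem \ref{th1} was proved by the probabilistic method). Once that is handled, the numerical inequality needed at the end is precisely the one already established during the derivation of the lower bound on $p$ in Lemma \ref{c1}, so no new analytical work is required.
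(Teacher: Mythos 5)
Your construction does not close, and the failure is in the last step. To conclude that $DS(G)\cup B$ is within the Theorem~\ref{th1} bound for $G_T$ you need $b \le \frac{2n(1+\ln(1+\frac{3\delta}{2}))}{1+\frac{3\delta}{2}} - \frac{n(1+\ln(1+\delta))}{1+\delta}$, but the inequality verified inside Lemma~\ref{c1} points the \emph{other way}: there the quantity $\frac{(1-p)\ln(n/2)}{p}$ is a \emph{lower} bound on $b$, and the whole point of the derivation is to show that this lower bound \emph{exceeds} the difference of the two dominating-set bounds (that is how the constraint $p\ge\delta/n$ is extracted). Quantitatively, Lemma~\ref{th5} with $p=\hat{\delta}/n$ forces $b\approx\frac{n\ln(n/2)}{\hat{\delta}}$, while the available slack is only of order $\frac{n\ln(1+\delta)}{3\delta}$; since the paper assumes $\delta<\frac{n}{2e}$, we have $\ln(n/2)\gg\frac{\ln\delta}{3}$ and the union $DS(G)\cup B$ overshoots the target bound by a large factor. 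So the set you assemble is a legitimate dominating set of $G_T$, but it does not certify the bound, and no amount of bookkeeping of the expectation statements repairs the arithmetic. (A secondary issue you already flag yourself: $\delta'\ge\frac{3\delta}{2}$ holds only in expectation, so even the target $\frac{n'(1+\ln(1+\delta'))}{1+\delta'}$ is an expected quantity here.)

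The paper takes an entirely different, non-constructive route: it notes that existence of a dominating set of $G_T$ within the bound is already given by applying Theorem~\ref{th1} directly to $G_T$ (the theorem holds for every graph), that the greedy heuristic of Theorem~\ref{th3} actually attains it, and it then discharges Lemma~\ref{cf1} by a protocol argument --- verifiers reject any solution exceeding the bound, and rational miners competing for the reward will submit a solution at least as good as greedy. If you prefer a purely mathematical proof, the one-line argument is the direct application of Theorem~\ref{th1} (or Theorem~\ref{th3}) to $G_T$ with $n'=2n$ and minimum degree $\delta'$; the decomposition through $DS(G)$, Lemma~\ref{th4} and Lemma~\ref{th5} is the wrong tool for this particular statement.
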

\begin{proof}
    To verify the PoW solutions one peer need to check whether the solution provided is the dominating set of $G_T$. The peers don’t need to be sure that the dominating set found by a miner is the minimum dominating set in the graph. She can only choose the smallest one she has received in the epoch and can accept the corresponding block. An honest miner will check whether the dominating set for $G_T$ follows the bound mentioned in Theorem \ref{th1}. Additionally, a rational miner will use her maximum computation power to ensure that the solution is not just satisfying the bound but has a low cardinality. It is highly unlikely that miners collude and submit a bad solution (i.e. exceeding the bound). Even if there exists one miner that provides a better solution than the rest, he gets rewarded. 
\end{proof}

\begin{lemma}
\label{cc1}
Given $|DS(G_T)|$ is within the bound stated in Theorem \ref{th1} then in expectation, $|DS(G)|$ follows this bound.
\end{lemma}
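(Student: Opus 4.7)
My plan is to combine the structure of the retrieval in Algorithm \ref{ds-g} with the expected-degree accounting already developed in Lemmas \ref{th4}--\ref{c1}.

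First, I would pin down the size of the retrieved set. Write the miner's solution as $DS(G_T)=V_{ds}\cup W_{ds}$ with $V_{ds}\subseteq V$ and $W_{ds}\subseteq V_T\setminus V$. Algorithm \ref{ds-g} outputs $DS(G)=V_{ds}\cup\{v_i : w_i\in W_{ds}\text{ is needed to dominate some }v_j\in V\setminus N[V_{ds}]\}$. By the construction of $G_T$, a vertex $w_i$ is adjacent to $v_j\in V$ only when $v_j\in N(v_i)$, so the lift $w_i\mapsto v_i$ covers $v_j$ in $G$; hence $DS(G)$ dominates $G$ and $|DS(G)|\leq|V_{ds}|+|W_{ds}|=|DS(G_T)|$. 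Applying Lemma \ref{cf1} together with the degree accounting from Lemma \ref{c1} --- each $v_i\in V$ has expected degree $\tfrac{3d_i}{2}\geq\tfrac{3\delta}{2}$ in $G_T$, while each $w_i\in V_T\setminus V$ has expected degree $\tfrac{d_i}{2}+\hat\delta\tfrac{n-1}{n}\geq\tfrac{5\delta}{2}$ since $\hat\delta=2\delta$ --- gives $E[\delta']\geq\tfrac{3\delta}{2}$ and hence $E[|DS(G_T)|]\leq\tfrac{2n(1+\ln(1+3\delta/2))}{1+3\delta/2}$.

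Second, to land on the tighter target $\tfrac{n(1+\ln(1+\delta))}{1+\delta}$ rather than the looser $G_T$-bound, I would exploit the decomposition implicit in Lemmas \ref{th4}--\ref{th5}: any dominating set of $G$ meeting the Theorem \ref{th1} bound extends to a dominating set of $G_T$ of size $\tfrac{n(1+\ln(1+\delta))}{1+\delta}+b$, where $b$ is the small additive correction from Lemma \ref{th5}. A rational miner's greedy heuristic (Algorithm \ref{greedy}) prefers higher-expected-degree vertices, and $V$-vertices populate the high-degree range of $G_T$; therefore in expectation the $V$-part $V_{ds}$ of the miner's output is already a dominating set of $G$ of size at most $\tfrac{n(1+\ln(1+\delta))}{1+\delta}$, while $|W_{ds}|\leq b$. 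Since Algorithm \ref{ds-g} only lifts $w_i$'s that cover genuinely uncovered vertices of $V$, these lifts are absorbed into $V_{ds}$'s budget, yielding $E[|DS(G)|]\leq\tfrac{n(1+\ln(1+\delta))}{1+\delta}$.

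The main obstacle will be justifying the last sentence rigorously: the naive chain $|DS(G)|\leq|DS(G_T)|\leq(\text{$G_T$-bound})$ overshoots the target by roughly a factor $4/3$ for large $\delta$, so the proof must exhibit in expectation that the $V$-part of a near-optimal $DS(G_T)$ is essentially a good dominating set of $G$ on its own. If a direct analysis of greedy on $G_T$ proves recalcitrant, I would fall back on a coupling argument: for each realization of the hash bits that determine $G_T$, exhibit the canonical $DS(G_T)$ obtained by extending a fixed near-optimal $DS(G)$ via Lemma \ref{th5}, note that the miner's output is no larger (since the miner optimizes over a richer family), and observe that retrieval from this canonical set recovers the original $DS(G)$, thus transferring the expectation bound.
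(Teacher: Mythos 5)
Your proposal follows essentially the same route as the paper: it rests on the decomposition of the $G_T$-bound into the $G$-bound plus the correction term $b$ supplied by Lemmas \ref{th4}--\ref{c1}, and concludes that the $V$-part of a $DS(G_T)$ meeting Theorem \ref{th1} must in expectation meet the $G$-bound, with the lift $w_i\mapsto v_i$ handling the rest. The ``main obstacle'' you flag --- that the $V$-part of an arbitrary near-optimal $DS(G_T)$ need not itself be a small dominating set of $G$, and that the naive chain through the $G_T$-bound overshoots --- is precisely the step the paper also leaves informal, so your write-up matches the paper's argument while being more candid about where it is weakest.
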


\begin{proof}
    When $G$ is extended to $G_T$, the expected degree of each vertex in $G$ increases by a factor $\frac{1}{2}$. The expected minimum degree of $G_T$ increases to $\frac{3\delta}{2}$. If a miner starts with finding the dominating set for $G$, then as per Lemma \ref{th4}, $\frac{n}{2}$ vertices of $G_T\setminus G$ are already covered in expectation. From Lemma \ref{th5} and Lemma \ref{c1}, we infer that upon setting the probability of edge formation in $G_T\setminus G$ to $\frac{2\delta}{n}$, we get a dominating set of cardinality at least $\frac{2n(1+\ln(1+1.5\delta)}{1+1.5\delta}-\frac{n(1+\ln(1+\delta)}{1+\delta}$, covering the remaining $\frac{n}{2}$ vertices of $G_T\setminus G$. If the expected upper bound on $|DS(G_T)|$ is $\frac{2n(1+\ln(1+1.5\delta)}{1+1.5\delta}$ as per Theorem \ref{th1}, then $|DS(G)| \leq \frac{n(1+\ln(1+\delta)}{1+\delta}$ in expectation.

   A miner can use the strategy of finding the dominating set of $G_T\setminus G$ and then find the dominating set for $G$. But there is a risk that the cardinality may exceed $\frac{2n(1+\ln(1+1.5\delta)}{1+1.5\delta}-\frac{n(1+\ln(1+\delta)}{1+\delta}$. But using the strategy of finding the dominating set of $G_T$ within the bound ensures the dominating set of $G$ has a cardinality lower than $\frac{n(1+\ln(1+\delta)}{1+\delta}$. The above analysis proves that any strategy applied by the miner to find the dominating set for $G_T$ does not degrade the solution quality of the dominating set for graph $G$ in expectation. 
\end{proof}

\section{Security analysis}
\label{security}
% Our goal is to prove that the Nakamoto consensus using our proposed PoW mechanism guarantees \emph{safety} and \emph{liveness} \cite{ren2019analysis}. We define the properties as follows:\\
% (i) \emph{Safety}: Honest miners do not commit different blocks at the same height.\\
% (ii) \emph{Liveness}: If all honest miners in the system attempt to include a certain input block then, after a few rounds, all miners report the input block as stable.
Our goal is to prove that the Nakamoto consensus using our proposed PoW mechanism guarantees \emph{safety} and \emph{liveness} \cite{ren2019analysis}. We define the properties as follows:\\
(i) \emph{Safety}: Honest miners do not commit different blocks at the same height.\\
(ii) \emph{Liveness}: If all honest miners in the system attempt to include a certain input block then, after a few rounds, all miners report the input block as stable.

It may happen that two or more miners solved their instances at about the same time and published their blocks, creating the situation that is known as a fork in blockchain systems. Forks are usually resolved in the synchronization phase using the rule specified for the particular Blockchain. Only one of the blocks will pass both the verification and synchronization phases. We define a chain selection rule whereby the chain having maximum work done is selected.

\subsection{Chain selection rule in \emph{Chrisimos}}
\label{chain}
%\pb{is this worth a separate section?}
Work done in a chain is summation of the work done in the individual blocks forming the chain. The work done in a block is proportional to the size of the graph as well as the effort put by the miner in finding a dominating set of lower cardinality. Thus, we define the work done as the product of the number of edges and the number of vertices in the extended graph, scaled by the ratio of the expected permissible size of dominating set for the extended graph and the size of the dominating set returned as a result. The work done in block $B$ is defined as:
\begin{equation}
    \textrm{Work-done}_B=\Big( 2|E|+\frac{\hat{\delta}(|V|-1)}{2} \Big) \times |V| \times \frac{\frac{2n(1+\ln(1+\frac{3\delta}{2}))}{1+\frac{3\delta}{2}}}{|DS(G_T)|}
\end{equation}
  
where $|E_T|=\Big( 2|E|+\frac{\hat{\delta}(|V|-1)}{2} \Big)$ and expected permissible size $DS(G_T)\leq \frac{2n(1+\ln(1+\frac{3\delta}{2}))}{1+\frac{3\delta}{2}}$ (minimum degree of $G_T$ increases to $\frac{3\delta}{2}$ in expectation). If there is a fork at block $B'$, then the miner chooses the chain, starting from $B'$, having the highest work done, i.e. if $\mathcal{C}=\{C_1,C_2,\ldots,C_m\}$ be $m$ such forks from block $B'$, then choose the chain $C_i \in \mathcal{C}$ such that $\sum_{B \in C_i: C_i \in \mathcal{C}} \textrm{Work-done}_B$ is maximum. An honest miner considers a block $B$ committed if $B$ is buried at least $f$ blocks deep in its adopted chain. We do not quantify the value of $f$ as it must be correlated to the winning chance of an adversary to mine a valid block and we keep the analysis as part of future work.

We summarize the rules for checking the validity of a chain as follows:\\
 (i) Verifier rejects a block that has a graph instance with id either less than or equal to the graph id of the instance mined in the previous block of the main chain. If the graph instance has a malformed signature (not signed by the committtee) or the hash of the graph does not match, verifier rejects the block. \\
 (ii) \emph{A block is said to be checkpointed if it has received at least $f$ confirmations}: All blocks before the last checkpointed block is also considered checkpointed. Additionally, we assume all honest miners reach a consensus over a single chain having the last checkpointed block. This implies that within the next $f$ blocks, the view of the network will be the same for all the miners till the last checkpointed block but can differ after that.\\
 (iii) \emph{If the miners observe another sub-chain with higher work done}: In this situation majority of the honest miners have a consensus on the view of the main chain till the last checkpointed block (i.e. the block that has $f$ confirmations). Now the miners encounter another sub-chain that has induced a fork. The following cases can happen:
 \begin{itemize}
 \item If any block in the other sub-chain violates rule (i), discard the chain.
     \item If the fork starts from a block before the last checkpointed block, then the chain is discarded. 
 \item If the fork starts from or after the last checkpointed block, then there could be two possible cases :\\
 (a) If the fork occurs from the checkpointed block, then all the sub-chains having a length less than $f$ will be discarded. For the rest of the sub-chains with a length of at least $f$, follow the rule (b) mentioned below.\\
 (b) If fork occurs after the checkpointed block, then select the sub-chain with the highest work done. If there is a tie, randomly choose one of the chains.
 \end{itemize}

% The security of our scheme in terms of double-spending and selfish mining attacks  is ensured by this chain selection rule. We do not quantify the value of $f$ as that would be correlated to the winning chance of an adversary to mine a valid block. The analysis would require a different mathematical model and we keep it as part of future work. 
%\iffalse
\subsection{Security analysis}
We show that in the proposed scheme, the estimated time for block addition is sufficient for a graph instance to be solved guaranteeing progress. Since all the graph instance is solved and added sequentially into the blockchain, we argue the safety property in terms of selfish-mining attacks and double-spending attacks \cite{nicolas2019comprehensive}. The property of safety and liveness holds in the synchronous model. In an asynchronous setting, it is tricky to argue whether all graph instances announced in the network get added to the blocks of the blockchain. We leave this analysis as a part of the future work. We state certain lemmas that justify the safety and liveness of \emph{Chrisimos}. An informal proof of these lemmas has been discussed in Section \ref{security}. 

\begin{lemma}
\label{x}
    For a graph instance $G$, the block time interval $T_{max}^G$ is sufficient for adding the block to the Blockchain.
\end{lemma}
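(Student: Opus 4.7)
The plan is to argue that $T_{max}^G$ has been constructed precisely so that the sum of the block-generation and block-verification costs fit inside it, with an explicit safety factor built in. I would first recall from the construction of the lookup table in Section~\ref{solution} that the recorded time $\tau$ for a benchmark instance is the measured cost of (i) running the \emph{Extend} function, (ii) running \texttt{FindDominatingSet} (the greedy heuristic), and (iii) running \emph{Block Verify}. From the two runtime lemmas, these are $\mathcal{O}(|E|)$, $\mathcal{O}(|V|)$, and $\mathcal{O}(|V|)$ respectively, so the overall cost is dominated by $\mathcal{O}(|E|)$ and scales with the quantity $(2|E|+\hat{\delta}(|V|-1))\cdot|V|$, which is the ``work-done'' surrogate introduced in Section~\ref{chain}.

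Next, I would invoke the scaling rules (B) and (C) described at the end of the look-up table construction. Given a new instance $G''(V'',E'')$ with minimum degree $\delta''$, one picks either an entry with matching $|V''|$ or, failing that, the entry of largest $|V^{*}| \le |V''|$, and multiplies the measured time by the ratio $\frac{2|E''|+\delta''(|V''|-1)}{2|E^{*}|+\delta^{*}(|V^{*}|-1)}$. Because the runtime of \emph{Extend}, \texttt{FindDominatingSet}, and \emph{Block Verify} is linear in the extended edge count (and hence in the numerator of that ratio) up to machine-dependent constants absorbed in $\tau$, this scaled quantity is an upper bound on the expected cost of running the three subroutines on $G''$.

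Then I would point out that $T_{max}^G = l\tau$ with $l>1$ strictly exceeds the scaled cost, giving slack to absorb (a) the mild additive $\mathcal{O}(|V|)$ verification overhead that was dominated by $\mathcal{O}(|E|)$ in the asymptotic bound but carries its own constant, and (b) the fact that during a block interval each miner runs \texttt{FindDominatingSet} possibly several times to improve cardinality before broadcasting, as described in mining step (A). Since the block verification at each peer is also $\mathcal{O}(|V|)$ and is subsumed by the same slack factor $l$, the interval is long enough for the miner to propose a block and for the network to verify it.

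The main obstacle I expect is justifying that scaling purely by $(2|E|+\hat{\delta}(|V|-1))\cdot|V|$ really captures the runtime when $|V''|$ does not match any lookup entry exactly; in particular that when we fall back to an entry with $|V^{*}|<|V''|$, the same linear scaling remains an \emph{upper} bound rather than merely an estimate. I would handle this by noting that the greedy heuristic's cost is bounded by the sum of degrees of chosen vertices, which is at most $2|E_T|$, and that \emph{Extend} reads a bit-string of length $2|E_T|$ plus $\binom{n}{2}$ indicator positions of which only $\hat{\delta}(n-1)/2$ are active; both quantities appear in the numerator of the scaling ratio, so the monotonicity needed for the bound holds. Taking $l>1$ large enough then absorbs any residual constant-factor variation between benchmark and target instances, yielding $T_{max}^G$ as a sufficient block interval.
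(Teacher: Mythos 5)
Your argument follows essentially the same route as the paper's own (quite brief) proof: both rest on the observation that $T_{max}^G = l\tau$ with $l>1$ is derived from the lookup table entry (scaled by edge/vertex count when no exact match exists), and that since the table was built by timing the greedy heuristic, a miner can always obtain at least one valid solution within the interval. Your version merely fleshes out the scaling and runtime details that the paper leaves implicit, so it is consistent with and slightly more careful than the published proof.
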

\begin{proof}
We had shown during the estimation of block interval time that it takes into account the block generation as well as verification time. If the block generation time is $\tau$ (if the graph instance is already present in the lookup table), we set $T_{max}^G$ to $l\tau: l>1$. If the graph instance is not present in the lookup table, the time is estimated by scaling it based on the edge count and vertex count of the graph instance. Since the lookup table is prepared by using a greedy heuristic, a rational miner will definitely get a solution by at least using the greedy heuristic. 
\end{proof}
%The proof of these lemmas follows from the chain selection rule. 
%When a miner adopts a new highest-work-done chain, either through mining or by receiving from other miners, it broadcasts and mines on top of the highest-work-done chain. 

% \begin{lemma}
% \label{x}
%     For a graph instance $G$, the block time interval $T_{max}^G$ is sufficient for adding the block to the Blockchain.
% \end{lemma}
\iffalse
We had shown during the estimation of block interval time that it takes into account the block generation as well as verification time. If the block generation time is $\tau$ (if the graph instance is already present in the lookup table), we set $T_{max}^G$ to $l\tau: l>1$. If the graph instance is not present in the lookup table, the time is estimated by scaling it based on the edge count and vertex count of the graph instance. Since the lookup table is prepared by using a greedy heuristic, a rational miner will definitely get a solution by at least using the greedy heuristic. 
\fi

\begin{lemma}
The probability of a double spending attack on any block before the last checkpointed block is negligible.    
\end{lemma}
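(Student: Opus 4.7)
The plan is to reduce the double-spending event to a work-race between honest miners and the adversary, and then bound the probability of that race being won by the adversary via a concentration argument with a negative drift. Suppose a block $B$ lies before the last checkpointed block $B^\star$. By definition of ``checkpointed'', $B^\star$ is buried by at least $f$ subsequent blocks, so the honest chain above $B$ contains at least $f$ blocks plus whatever additional blocks were appended between $B$ and $B^\star$. For a double-spend on $B$ to succeed under the chain-selection rule of Section \ref{chain}, the adversary must publish an alternative sub-chain branching at or before $B$ whose cumulative \textrm{Work-done} strictly exceeds that of the honest sub-chain above the branching point, while still respecting the validity rules (monotone graph ids, committee signatures on instances, and a valid dominating set in each block).

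First I would set up the probabilistic model. Let $W_h(r)$ and $W_a(r)$ denote the cumulative work-done accumulated by honest miners and the adversary, respectively, during $r$ rounds after the fork point. Because the graph instances are supplied by an honest-majority committee and filtered by the hardness threshold of \cite{sym15010140}, the per-block work-done is bounded below by a strictly positive constant $w_{\min}$ determined by the committee's threshold; the scaling factor in the definition of $\textrm{Work-done}_B$ is likewise bounded above because $|DS(G_T)|$ cannot drop below a constant fraction of the Theorem~\ref{th1} bound for instances passing the hardness filter. Under the standard Nakamoto assumption that the adversary controls a fraction $\alpha < 1/2$ of the effective computing power and under the synchronous model with bound $\Delta$, the expected increment of $W_h$ per round dominates the expected increment of $W_a$ by a gap proportional to $(1-2\alpha)\, w_{\min}$.

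Next I would analyse the race $Z(r) = W_a(r) - W_h(r)$. Because each block's work-done is uniformly bounded in $[w_{\min}, w_{\max}]$, the increments of $Z$ are independent and bounded, and $Z$ has strictly negative drift. A direct Azuma--Hoeffding bound (or, equivalently, the biased random-walk calculation from Nakamoto's original analysis, adapted from block counts to work-done units) yields
\begin{equation}
\Pr\bigl[\exists r\ge 0: Z(r) \ge W_h^{(f)}\bigr] \le \exp\!\bigl(-c\, f\bigr),
\end{equation}
where $W_h^{(f)}$ is the honest work accumulated in the $f$ blocks separating $B$ from the tip of the honest view, and $c > 0$ depends on $\alpha$, $w_{\min}$, and $w_{\max}$. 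Since $W_h^{(f)} \ge f \cdot w_{\min}$ with overwhelming probability by another Hoeffding application, the probability that the adversary ever assembles a competing sub-chain of higher cumulative work is exponentially small in $f$, hence negligible.

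The hard part will be justifying the uniform lower bound $w_{\min}$ on per-block work, because the definition of $\textrm{Work-done}_B$ depends on the edge count, vertex count, minimum degree, and on the ratio $\frac{2n(1+\ln(1+3\delta/2))/(1+3\delta/2)}{|DS(G_T)|}$, none of which are intrinsically bounded away from zero without reference to the committee's filter. I would address this by invoking the committee's honest-majority assumption together with the hardness threshold: any accepted instance must have $|V|$ and $|E|$ above explicit thresholds, and Lemma~\ref{cf1} ensures that rational miners return dominating sets within the Theorem~\ref{th1} bound, so the work-done ratio factor is bounded in a constant interval. Once $w_{\min}$ is established, the remaining argument is a routine adaptation of the classical longest-chain security proof from block depth to accumulated work, using the work-based chain-selection rule in place of the longest-chain rule.
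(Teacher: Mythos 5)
Your proof takes a genuinely different --- and much heavier --- route than the paper's. The paper's entire argument is two sentences: by rule (iii) of the chain-selection rule, any fork that branches from a block \emph{before} the last checkpointed block is discarded outright, and by rule (ii) the honest majority is assumed to have already converged on a single chain up to that checkpoint; hence a competing sub-chain rewriting such a block is rejected by definition, and no probabilistic race ever takes place. You instead mount the classical Nakamoto work-race analysis (negative-drift walk, Azuma--Hoeffding, exponential decay in $f$). That analysis is really aimed at a different and harder question --- why $f$ confirmations make the checkpoint itself safe --- which the paper explicitly declines to answer (``We do not quantify the value of $f$\ldots we keep the analysis as part of future work''). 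So you are attempting a strictly stronger statement than the one the paper establishes, while bypassing the one ingredient the paper actually uses, namely that pre-checkpoint forks are discarded unconditionally.

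On its own terms your argument also has a genuine gap, which you yourself flag as ``the hard part'': the uniform bounds $w_{\min}$, $w_{\max}$ on per-block $\textrm{Work-done}_B$. The ratio factor $\frac{2n(1+\ln(1+3\delta/2))/(1+3\delta/2)}{|DS(G_T)|}$ is bounded below by $1$ for solutions meeting the Theorem~\ref{th1} bound, but nothing in the paper bounds it \emph{above}: the true minimum dominating set of an accepted instance can be far smaller than the Theorem~\ref{th1} bound (consider graphs with a few high-degree hubs, exactly the power-law instances the paper targets), so a miner who finds an unusually small dominating set earns an arbitrarily large work credit for a single block. Lemma~\ref{cf1} does not rescue this --- it only says rational miners stay within the upper bound on cardinality, not that $|DS(G_T)|$ stays within a constant factor of that bound. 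Without uniformly bounded increments, the bounded-differences concentration step fails, and the drift comparison between $W_h$ and $W_a$ is no longer controlled by the adversary's fraction of computing power alone, since work-done rewards solution quality rather than effort expended. If you want to pursue the race-based proof, you would first need to argue, or assume, that the per-block work credit is bounded; the paper's definitions do not currently guarantee this.
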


This follows from rule (iii) of \emph{chain selection rule} where the honest majority has consensus till the last checkpointed block. Any fork after this will lead to the selection of the sub-chain having the highest work done.

\begin{lemma}
The probability of a selfish mining attack is negligible provided the signature scheme used by committee members is universally unforgeable and the majority of miners in the blockchain network are honest.
\end{lemma}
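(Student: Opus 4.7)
The plan is to argue via a reduction: if a selfish miner could succeed with non-negligible probability, then either the committee's signature scheme is forgeable or the honest majority assumption fails. First I would formalize what constitutes a successful selfish-mining attack in \emph{Chrisimos}: the adversary secretly maintains a private sub-chain, hoping that when she releases it, the chain selection rule picks her chain over the honest public chain, thereby orphaning the honest miners' blocks. Under our chain selection rule, winning means producing a private fork whose cumulative $\textrm{Work-done}_B$ strictly exceeds that of the public fork diverging from the same point, and whose graph ids are strictly increasing and correctly signed.

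Next I would eliminate the easy attack surfaces. Because the graph instances are announced by the committee and each carries a $t$-of-$n$ signature under $pk_C$ that is universally unforgeable, the adversary cannot inject a self-chosen (e.g., artificially easy) instance into her private fork; she must work on the same sequence of committee-signed instances that the honest miners see, and the monotone-increasing $id_G$ rule prevents her from recycling old instances. Thus every block in her private chain must solve a genuine instance $G_i$ with the prescribed transformation $G_{T,i}$ tied to her own block's Merkle root and previous-block hash, so she cannot piggyback on honest miners' solutions.

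The core of the argument is then that, for each mining epoch of length $T_{max}^{G_i}$, the cardinality of the dominating set a party can produce is a monotone function of the computational effort invested within that window. Since the honest miners collectively control a majority of the computation power and run at least the greedy heuristic (which by Theorem \ref{th3} already meets the permissible bound), the expected $\textrm{Work-done}_B$ per block of the honest aggregate exceeds the expected $\textrm{Work-done}_B$ of the adversary, both per block (because honest miners compete to minimize $|DS(G_T)|$) and per unit time. I would then invoke a standard Chernoff-style concentration argument over a sequence of $f$ consecutive epochs (the checkpoint depth): the probability that the adversary's cumulative work done strictly exceeds the honest cumulative work done over this window decays exponentially in $f$, and therefore in the security parameter.

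The main obstacle I expect is formalizing the ``work-done per unit time'' comparison rigorously, because unlike hash-based PoW the per-block work is not a clean geometric random variable but depends on the heuristic quality achievable by the adversary versus by the honest majority on a \emph{specific} instance $G_{T,i}$. I would handle this by observing that $\textrm{Work-done}_B$ is bounded above by $\big(2|E|+\tfrac{\hat\delta(|V|-1)}{2}\big)\cdot|V|\cdot\frac{k_{\max}}{k_{\min}}$ where $k_{\max}/k_{\min}$ is a bounded ratio determined by Theorem \ref{th1} and the best achievable approximation, so a single epoch's score is bounded, and Azuma--Hoeffding applied to the martingale of work-done differences across epochs yields the desired exponential bound. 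Combined with the signature unforgeability, which rules out all non-computational attack paths, this establishes that the success probability is negligible.
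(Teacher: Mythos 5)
Your proposal matches the paper's proof in its first half: the paper likewise uses the universal unforgeability of the committee's \texttt{t-of-n} signature and the monotone $id_G$ rule to argue that the adversary cannot fabricate or recycle instances and must solve the same committee-issued sequence as the honest miners, and it likewise reduces success to the adversary's private fork accumulating more $\textrm{Work-done}_B$ than the public fork. The paper, however, then proceeds by a case analysis on where the fork begins relative to the last checkpointed block (rule (iii)(a) versus (iii)(b)), observes via Lemma \ref{x} that the private chain can be at most one block longer than the honest one because instances are released sequentially per epoch, and concludes only informally that the adversary must be ``lucky'' in at least one block while matching honest solution quality elsewhere.

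The genuine gap is in your quantitative core. The Chernoff/Azuma--Hoeffding step needs the per-epoch difference in work-done to be a supermartingale, i.e., that in expectation the honest chain's block beats the adversary's block in every epoch. You derive this from the premise that ``the cardinality of the dominating set a party can produce is a monotone function of the computational effort invested within that window,'' but nothing in the paper establishes such a model, and it is not analogous to hash-based PoW: dominating-set heuristics are not puzzles whose success probability scales linearly with hash rate, many honest miners running the greedy heuristic in parallel do not aggregate into a better solution the way parallel hashing aggregates into more attempts, and a single adversary with a superior heuristic could systematically (not just luckily) produce smaller sets. Without a formal computational model relating effort to achievable $|DS(G_T)|$ — which the paper explicitly defers to future work when it declines to quantify $f$ and the adversary's winning probability — the concentration argument has no random variable with the required negative drift to concentrate, and bounding the single-epoch score from above (your Azuma fix) does not supply that drift. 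Your signature-based elimination of non-computational attacks is sound and matches the paper; the exponential-decay claim is the part that cannot currently be completed.
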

\begin{proof}

We provide a proof sketch for the following lemma based on the chain rule defined in Section \ref{chain}. The adversary has to start mining the private chain after the last checkpointed block else as per rule (iii), its privately mined chain will anyway get discarded. We do consider the two cases where the miner will start privately mining from the checkpointed block or after that:\\
(a) If the malicious miner induces a fork from the last checkpointed block $B_l$, the malicious private subchain must be of length $\geq f$ as per rule (iii)(a) mentioned in Section \ref{chain}. The chain of the adversary remains the same till block $B_l$ and starts differing from here, so we label these blocks as $B_{l+i}', 1\leq i\leq z$ where $z\geq f$. From Lemma \ref{x}, any block $B_{l+t}', 1\leq t \leq f$ mined by the adversary must have fetched the dominating set on the same graph instance as that in block $B_{l+t}$. For the subchain having $B_{l+i}'$ blocks to be selected, $\sum_{i=1}^z \textrm{Work-done}_{B_{l+i}'}>\sum_{i=1}^f \textrm{Work-done}_{B_{l+i}}$. From Lemma \ref{x}, $z$ cannot exceed $f+1$. This is because each graph instance is provided sequentially after elapse of block interval time. If the malicious miner luckily finds a solution quite earlier than the rest of the miners for the $(l+f+1)^{th}$ instance, it will have a higher chance of winning due to the impact of an additional block $B_{l+f+1}$. If $z=f$, then the cumulative work done in all the blocks of the sub-chain must be higher. This would require the adversary to be lucky in at least one of the blocks where it had managed to find a better solution and the rest of the blocks in the sub-chain must provide a solution as good as the one provided by an honest miner. \\
(b) If the adversary starts selfish mining after the checkpointed block then any sub-chain of length less than $f$ would do but it needs to follow the criteria (iii) (b) of the \emph{chain selection rule} to win the mining game. 

In both cases, if the adversary finds that at $i^{th}$ block, $1\leq i \leq f$, the cumulative work done in his private sub-chain is less than the cumulative work done in the main chain then it is highly likely he will abandon selfish mining and try to add the new block on the main chain. To continue mining on his sub-chain, he has to calculate the expected probability that he wins the mining game by adding the next block and this is conditioned on the fact that others must return a result far worse than what the adversary does.

Another possibility to launch selfish mining is by privately mining a longer chain. It is not possible for a miner to generate several graph instances and mine a longer chain. It follows from rule (i) of \emph{chain selection rule}, where a verifier will reject any illegitimate graph instance not signed by the committee members. Since we assume that the majority of the committee members are honest, and a universally unforgeable signature scheme is used to sign the instance, miners will not be able to forge signatures for all the instances. Hence, the adversary will able to pull off the attack with negligible probability.
\end{proof}
\input{security}
\section{Experimental Analysis}
\label{experiment}
% All the miner needs is the capability to store the graph, do the transformation and then find the dominating set. After a graph instance is part of the chain, a miner can just delete the graph. 

\textbf{Setup}: For our experiments, we use Python 3.10.0, and NetworkX, version 3.1 - a Python package for analyzing complex networks. System configuration used is Intel Core i5-8250U CPU, Operating System: macOS 12.4, Memory: 8 GB of RAM. Social networks and other utility networks follow power-law model where few vertices are central to the graph instance. Thus we use synthetically generated graph instances (based on Barab{\'a}si-Albert Model \cite{albert2002statistical} and Erdős-Rényi Model \cite{seshadhri2012community}) mimicking this model to generate the benchmark datasets. We choose appropriate parameters to generate the synthetic graph instances such that they simulate real-life networks. The order of the graph varies between 1000 to 200000, and the average degree of the graph varied between 10 to 50. For a given number of vertices, we took the average over all the instances with varied edge counts. We run Algorithm \ref{gen_block} to estimate the block proposer time. We use the greedy heuristic in the module \texttt{FindDominatingSet} but the miner is free to choose any algorithm. To estimate the time taken to verify the block, we run Algorithm \ref{verify}. 

\textbf{Observation}: The plot in Fig. \ref{fig:plot} shows the impact of increasing the size of the graph instance (increase in number of vertices) on the block proposer time (or generation time) and verification time. For small-order graphs, it is of the order of seconds but for graphs having vertex count more than 75000, the block proposer time goes up to 7 mins. The time taken by the verifier shows a slow and steady increase and it is less than a minute even for a graph of size 200000. In our experiment, we extend the graph using the \emph{Extend }function, and that results in a doubling of vertex count. So if we report the result for vertex count 200000, it actually denotes the execution time of finding and verifying dominating set on a graph of size 400000. The block generation time is approximately 5 times that of the block verification time.
\begin{figure}[!ht]
    \centering
    \includegraphics[scale=0.35]{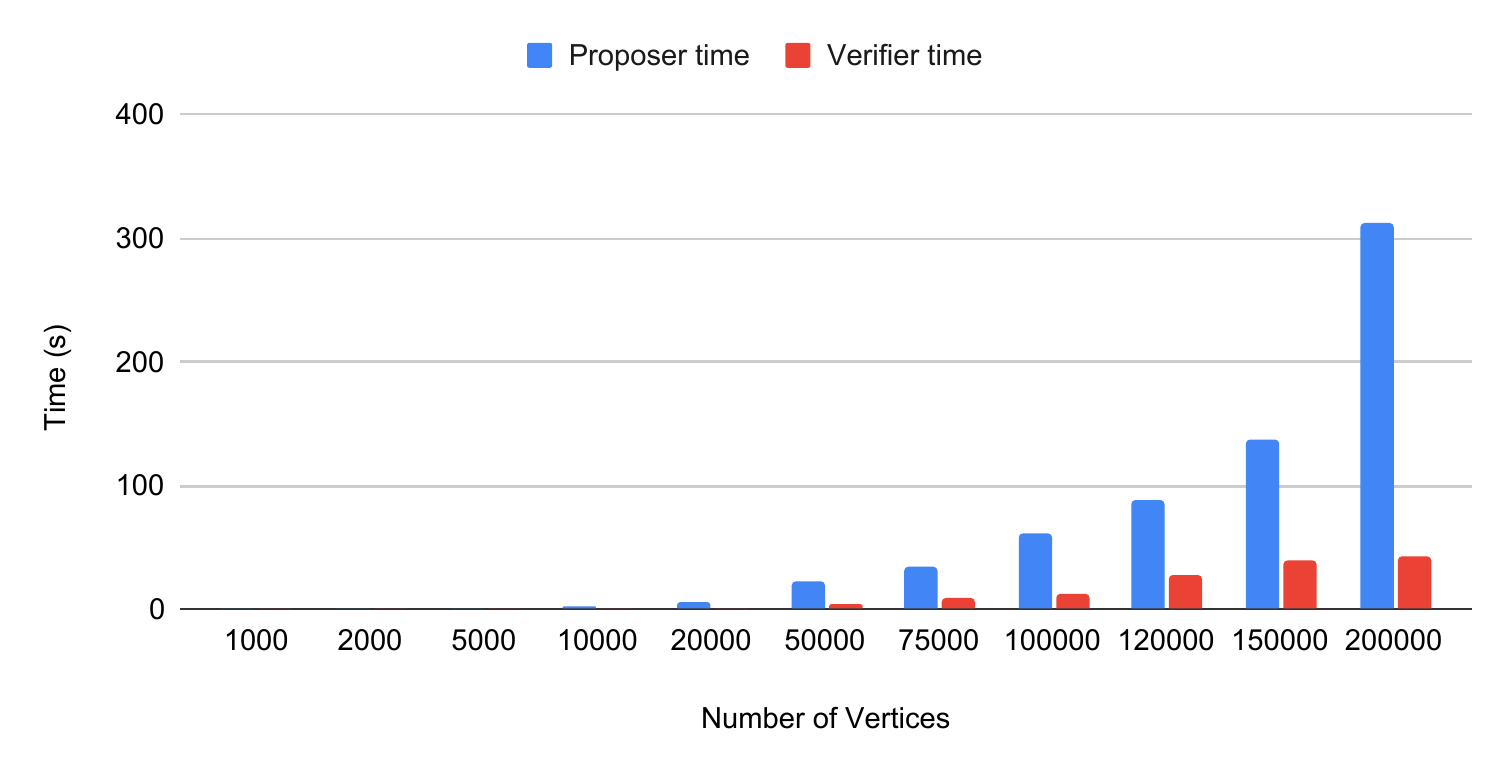}
    \caption{Plot showing impact of increasing number of vertices on block proposer time and block verification time}
    \label{fig:plot}
\end{figure}

\textbf{Discussion}: We observe that the block generation time increases almost linearly with an increase in the number of vertices in the graph instance. This is because the block generation time is dependent on exploring all the edges in the given graph instance. We consider moderately connected graph instances. Finding dominating set in too sparse or too dense graph becomes easier so no utility company would provide such graph instances. On the contrary, the verification time increases slowly (but linearly) compared to the block generation time with an increase in the size of the input graph. The reason is that the verifier just needs to check whether the vertices in the dominating set cover the entire graph. The time complexity is bounded by the number of vertices in the graph and it is less than the number of edges in the graph. This is fair enough as block verification must be done faster than block generation in any blockchain to allow more miners to join the Bitcoin network without too much spend on verification. 

Our results show that if a graph instance of an order as large as 200000 is provided for mining, the block interval time will be around 10 min to 12 min on average. This is similar to that of hash-based PoW. The block interval time might increase if the graph size increases but then the throughput of the Blockchain will reduce and hence a decision has to be made whether graph instance larger than 200000 must be allowed at the cost of reduced throughput.

\section{Work to be done}
\begin{itemize}
    \item Bound for $k = $ hardness.  
    \item Finding $Dom(G')$ is atleast as hard as finding $Dom(G)$.
    \item Given $Dom(G)$ finding $Dom(G')$  is hard but reverse is easy. 
\end{itemize}
\section{Conclusion}
\label{conclusion}
We propose \emph{Chrisimos}, a useful Proof-of-Work that solves a problem having real-life utility instead of wasting resources in generating nonce for hash-based PoW. Miners are asked to find a minimal dominating set on a real-life graph instance. Finding a minimum dominating set of size less than a positive integer is an NP-complete problem so miners use a heuristic of their choice. Miners return a solution and verifiers collect all the solutions that arrive within a given block interval time. Verifier selects the solution with the lowest cardinality and this simulates a decentralized minimal dominating set solver. We also mention a new chain rule that resolves disputes in the event of a fork and ensures security of our proposed PoUW. 
%Through experimental analysis and formal proof, we show that our scheme is secure and resolves problem of wastage of energy faced for hash-based PoW by replacing it with a graph theoretic problem.   

As a part of future work, we will propose a mathematical model that would allow us to quantify the computation power of the adversary and figure out its chance of winning the mining game over an honest miner. Based on the mathematical model, we will analyze the various attacks observed on the Bitcoin blockchain. We would also propose a generic model that would allow miners to accept any NP-complete problem and provide a solution to that. Additionally, we want to utilize pool mining to scale the system where each miner in the pool would solve the problem partly and then aggregate the partial solutions in generating the final solution.
 
% \section{Related Work}
% Keep a list of related work for future reference and citation. 

% \bibliographystyle{plainnat}
% \bibliography{references}

%%% -*-BibTeX-*-
%%% Do NOT edit. File created by BibTeX with style
%%% ACM-Reference-Format-Journals [18-Jan-2012].

\begin{thebibliography}{37}

%%% ====================================================================
%%% NOTE TO THE USER: you can override these defaults by providing
%%% customized versions of any of these macros before the \bibliography
%%% command.  Each of them MUST provide its own final punctuation,
%%% except for \shownote{}, \showDOI{}, and \showURL{}.  The latter two
%%% do not use final punctuation, in order to avoid confusing it with
%%% the Web address.
%%%
%%% To suppress output of a particular field, define its macro to expand
%%% to an empty string, or better, \unskip, like this:
%%%
%%% \newcommand{\showDOI}[1]{\unskip}   % LaTeX syntax
%%%
%%% \def \showDOI #1{\unskip}           % plain TeX syntax
%%%
%%% ====================================================================

\ifx \showCODEN    \undefined \def \showCODEN     #1{\unskip}     \fi
\ifx \showDOI      \undefined \def \showDOI       #1{#1}\fi
\ifx \showISBNx    \undefined \def \showISBNx     #1{\unskip}     \fi
\ifx \showISBNxiii \undefined \def \showISBNxiii  #1{\unskip}     \fi
\ifx \showISSN     \undefined \def \showISSN      #1{\unskip}     \fi
\ifx \showLCCN     \undefined \def \showLCCN      #1{\unskip}     \fi
\ifx \shownote     \undefined \def \shownote      #1{#1}          \fi
\ifx \showarticletitle \undefined \def \showarticletitle #1{#1}   \fi
\ifx \showURL      \undefined \def \showURL       {\relax}        \fi
% The following commands are used for tagged output and should be
% invisible to TeX
\providecommand\bibfield[2]{#2}
\providecommand\bibinfo[2]{#2}
\providecommand\natexlab[1]{#1}
\providecommand\showeprint[2][]{arXiv:#2}

\bibitem[Albert and Barab{\'a}si(2002)]%
        {albert2002statistical}
\bibfield{author}{\bibinfo{person}{R{\'e}ka Albert} {and}
  \bibinfo{person}{Albert-L{\'a}szl{\'o} Barab{\'a}si}.}
  \bibinfo{year}{2002}\natexlab{}.
\newblock \showarticletitle{Statistical mechanics of complex networks}.
\newblock \bibinfo{journal}{\emph{Reviews of modern physics}}
  \bibinfo{volume}{74}, \bibinfo{number}{1} (\bibinfo{year}{2002}),
  \bibinfo{pages}{47}.
\newblock


\bibitem[Alon and Spencer(2016)]%
        {alon2016probabilistic}
\bibfield{author}{\bibinfo{person}{Noga Alon} {and} \bibinfo{person}{Joel~H
  Spencer}.} \bibinfo{year}{2016}\natexlab{}.
\newblock \bibinfo{booktitle}{\emph{The probabilistic method}}.
\newblock \bibinfo{publisher}{John Wiley \& Sons}.
\newblock


\bibitem[Antonopoulos(2014)]%
        {antonopoulos2014mastering}
\bibfield{author}{\bibinfo{person}{Andreas~M Antonopoulos}.}
  \bibinfo{year}{2014}\natexlab{}.
\newblock \bibinfo{booktitle}{\emph{Mastering Bitcoin: unlocking digital
  cryptocurrencies}}.
\newblock \bibinfo{publisher}{" O'Reilly Media, Inc."}.
\newblock


\bibitem[Asgarnezhad and Torkestani(2011)]%
        {asgarnezhad2011connected}
\bibfield{author}{\bibinfo{person}{Razieh Asgarnezhad} {and}
  \bibinfo{person}{Javad~Akbari Torkestani}.} \bibinfo{year}{2011}\natexlab{}.
\newblock \showarticletitle{Connected dominating set problem and its
  application to wireless sensor networks}. In \bibinfo{booktitle}{\emph{The
  First International Conference on Advanced Communications and Computation,
  INFOCOMP}}. \bibinfo{pages}{46--51}.
\newblock


\bibitem[Back et~al\mbox{.}(2002)]%
        {back2002hashcash}
\bibfield{author}{\bibinfo{person}{Adam Back} {et~al\mbox{.}}}
  \bibinfo{year}{2002}\natexlab{}.
\newblock \showarticletitle{Hashcash-a denial of service counter-measure}.
\newblock  (\bibinfo{year}{2002}).
\newblock


\bibitem[Ball et~al\mbox{.}(2017)]%
        {ball2017proofs}
\bibfield{author}{\bibinfo{person}{Marshall Ball}, \bibinfo{person}{Alon
  Rosen}, \bibinfo{person}{Manuel Sabin}, {and}
  \bibinfo{person}{Prashant~Nalini Vasudevan}.}
  \bibinfo{year}{2017}\natexlab{}.
\newblock \showarticletitle{Proofs of useful work}.
\newblock \bibinfo{journal}{\emph{Cryptology ePrint Archive}}
  (\bibinfo{year}{2017}).
\newblock


\bibitem[Boneh et~al\mbox{.}(2006)]%
        {boneh2006strongly}
\bibfield{author}{\bibinfo{person}{Dan Boneh}, \bibinfo{person}{Emily Shen},
  {and} \bibinfo{person}{Brent Waters}.} \bibinfo{year}{2006}\natexlab{}.
\newblock \showarticletitle{Strongly unforgeable signatures based on
  computational Diffie-Hellman}. In \bibinfo{booktitle}{\emph{Public Key
  Cryptography-PKC 2006: 9th International Conference on Theory and Practice in
  Public-Key Cryptography, New York, NY, USA, April 24-26, 2006. Proceedings
  9}}. Springer, \bibinfo{pages}{229--240}.
\newblock


\bibitem[Chleb{\'i}k and Chleb{\'i}kov{\'a}(2004)]%
        {10.1007/978-3-540-30140-0_19}
\bibfield{author}{\bibinfo{person}{Miroslav Chleb{\'i}k} {and}
  \bibinfo{person}{Janka Chleb{\'i}kov{\'a}}.} \bibinfo{year}{2004}\natexlab{}.
\newblock \showarticletitle{Approximation Hardness of Dominating Set Problems}.
  In \bibinfo{booktitle}{\emph{Algorithms -- ESA 2004}},
  \bibfield{editor}{\bibinfo{person}{Susanne Albers} {and}
  \bibinfo{person}{Tomasz Radzik}} (Eds.). \bibinfo{publisher}{Springer Berlin
  Heidelberg}, \bibinfo{address}{Berlin, Heidelberg},
  \bibinfo{pages}{192--203}.
\newblock
\showISBNx{978-3-540-30140-0}


\bibitem[Dotan and Tochner(2020)]%
        {dotan2020proofs}
\bibfield{author}{\bibinfo{person}{Maya Dotan} {and} \bibinfo{person}{Saar
  Tochner}.} \bibinfo{year}{2020}\natexlab{}.
\newblock \showarticletitle{Proofs of Useless Work--Positive and Negative
  Results for Wasteless Mining Systems}.
\newblock \bibinfo{journal}{\emph{arXiv preprint arXiv:2007.01046}}
  (\bibinfo{year}{2020}).
\newblock


\bibitem[Douceur(2002)]%
        {douceur2002sybil}
\bibfield{author}{\bibinfo{person}{John~R Douceur}.}
  \bibinfo{year}{2002}\natexlab{}.
\newblock \showarticletitle{The sybil attack}. In
  \bibinfo{booktitle}{\emph{Peer-to-Peer Systems: First InternationalWorkshop,
  IPTPS 2002 Cambridge, MA, USA, March 7--8, 2002 Revised Papers 1}}. Springer,
  \bibinfo{pages}{251--260}.
\newblock


\bibitem[Fitzi et~al\mbox{.}(2022)]%
        {fitzi2022ofelimos}
\bibfield{author}{\bibinfo{person}{Matthias Fitzi}, \bibinfo{person}{Aggelos
  Kiayias}, \bibinfo{person}{Giorgos Panagiotakos}, {and}
  \bibinfo{person}{Alexander Russell}.} \bibinfo{year}{2022}\natexlab{}.
\newblock \showarticletitle{Ofelimos: Combinatorial Optimization via
  Proof-of-Useful-Work: A Provably Secure Blockchain Protocol}. In
  \bibinfo{booktitle}{\emph{Advances in Cryptology--CRYPTO 2022: 42nd Annual
  International Cryptology Conference, CRYPTO 2022, Santa Barbara, CA, USA,
  August 15--18, 2022, Proceedings, Part II}}. Springer,
  \bibinfo{pages}{339--369}.
\newblock


\bibitem[Fomin et~al\mbox{.}(2009)]%
        {fomin2009measure}
\bibfield{author}{\bibinfo{person}{Fedor~V Fomin}, \bibinfo{person}{Fabrizio
  Grandoni}, {and} \bibinfo{person}{Dieter Kratsch}.}
  \bibinfo{year}{2009}\natexlab{}.
\newblock \showarticletitle{A measure \& conquer approach for the analysis of
  exact algorithms}.
\newblock \bibinfo{journal}{\emph{Journal of the ACM (JACM)}}
  \bibinfo{volume}{56}, \bibinfo{number}{5} (\bibinfo{year}{2009}),
  \bibinfo{pages}{1--32}.
\newblock


\bibitem[Garey and Johnson(1979)]%
        {garey1979computers}
\bibfield{author}{\bibinfo{person}{Michael~R Garey} {and}
  \bibinfo{person}{David~S Johnson}.} \bibinfo{year}{1979}\natexlab{}.
\newblock \showarticletitle{Computers and intractability}.
\newblock \bibinfo{journal}{\emph{A Guide to the}} (\bibinfo{year}{1979}).
\newblock


\bibitem[Gennaro et~al\mbox{.}(2007)]%
        {gennaro2007secure}
\bibfield{author}{\bibinfo{person}{Rosario Gennaro}, \bibinfo{person}{Stanislaw
  Jarecki}, \bibinfo{person}{Hugo Krawczyk}, {and} \bibinfo{person}{Tal
  Rabin}.} \bibinfo{year}{2007}\natexlab{}.
\newblock \showarticletitle{Secure distributed key generation for discrete-log
  based cryptosystems}.
\newblock \bibinfo{journal}{\emph{Journal of Cryptology}}  \bibinfo{volume}{20}
  (\bibinfo{year}{2007}), \bibinfo{pages}{51--83}.
\newblock


\bibitem[Huestis(2023)]%
        {energy}
\bibfield{author}{\bibinfo{person}{Samuel Huestis}.}
  \bibinfo{year}{2023}\natexlab{}.
\newblock \bibinfo{title}{Cryptocurrency’s Energy Consumption Problem}.
\newblock
  \bibinfo{howpublished}{\url{https://rmi.org/cryptocurrencys-energy-consumption-problem/}}.
\newblock


\bibitem[Igor~Barinov(2018)]%
        {poa}
\bibfield{author}{\bibinfo{person}{Pavel~Khahulin Igor~Barinov,
  Viktor~Baranov}.} \bibinfo{year}{2018}\natexlab{}.
\newblock \bibinfo{title}{POA Network Whitepaper}.
\newblock
  \bibinfo{howpublished}{\url{https://github.com/poanetwork/wiki/wiki/POA-Network-Whitepaper}}.
\newblock


\bibitem[Karantias et~al\mbox{.}(2020)]%
        {karantias2020proof}
\bibfield{author}{\bibinfo{person}{Kostis Karantias}, \bibinfo{person}{Aggelos
  Kiayias}, {and} \bibinfo{person}{Dionysis Zindros}.}
  \bibinfo{year}{2020}\natexlab{}.
\newblock \showarticletitle{Proof-of-burn}. In
  \bibinfo{booktitle}{\emph{Financial Cryptography and Data Security: 24th
  International Conference, FC 2020, Kota Kinabalu, Malaysia, February 10--14,
  2020 Revised Selected Papers 24}}. Springer, \bibinfo{pages}{523--540}.
\newblock


\bibitem[King(2013)]%
        {king2013primecoin}
\bibfield{author}{\bibinfo{person}{Sunny King}.}
  \bibinfo{year}{2013}\natexlab{}.
\newblock \showarticletitle{Primecoin: Cryptocurrency with prime number
  proof-of-work}.
\newblock \bibinfo{journal}{\emph{July 7th}} \bibinfo{volume}{1},
  \bibinfo{number}{6} (\bibinfo{year}{2013}).
\newblock


\bibitem[King(2014)]%
        {gap}
\bibfield{author}{\bibinfo{person}{Sunny King}.}
  \bibinfo{year}{2014}\natexlab{}.
\newblock \bibinfo{title}{What is Gapcoin?}
\newblock \bibinfo{howpublished}{\url{https://gapcoin.org}}.
\newblock


\bibitem[King and Nadal(2012)]%
        {king2012ppcoin}
\bibfield{author}{\bibinfo{person}{Sunny King} {and} \bibinfo{person}{Scott
  Nadal}.} \bibinfo{year}{2012}\natexlab{}.
\newblock \showarticletitle{Ppcoin: Peer-to-peer crypto-currency with
  proof-of-stake}.
\newblock \bibinfo{journal}{\emph{self-published paper, August}}
  \bibinfo{volume}{19}, \bibinfo{number}{1} (\bibinfo{year}{2012}).
\newblock


\bibitem[Loe and Quaglia(2018)]%
        {loe2018conquering}
\bibfield{author}{\bibinfo{person}{Angelique~Faye Loe} {and}
  \bibinfo{person}{Elizabeth~A Quaglia}.} \bibinfo{year}{2018}\natexlab{}.
\newblock \showarticletitle{Conquering generals: an np-hard proof of useful
  work}. In \bibinfo{booktitle}{\emph{Proceedings of the 1st Workshop on
  Cryptocurrencies and Blockchains for Distributed Systems}}.
  \bibinfo{pages}{54--59}.
\newblock


\bibitem[Maleš et~al\mbox{.}(2023)]%
        {sym15010140}
\bibfield{author}{\bibinfo{person}{Uroš Maleš}, \bibinfo{person}{Dušan
  Ramljak}, \bibinfo{person}{Tatjana Jakšić~Krüger},
  \bibinfo{person}{Tatjana Davidović}, \bibinfo{person}{Dragutin Ostojić},
  {and} \bibinfo{person}{Abhay Haridas}.} \bibinfo{year}{2023}\natexlab{}.
\newblock \showarticletitle{Controlling the Difficulty of Combinatorial
  Optimization Problems for Fair Proof-of-Useful-Work-Based Blockchain
  Consensus Protocol}.
\newblock \bibinfo{journal}{\emph{Symmetry}} \bibinfo{volume}{15},
  \bibinfo{number}{1} (\bibinfo{year}{2023}).
\newblock
\showISSN{2073-8994}
\urldef\tempurl%
\url{https://doi.org/10.3390/sym15010140}
\showDOI{\tempurl}


\bibitem[Moran and Orlov(2019)]%
        {moran2019simple}
\bibfield{author}{\bibinfo{person}{Tal Moran} {and} \bibinfo{person}{Ilan
  Orlov}.} \bibinfo{year}{2019}\natexlab{}.
\newblock \showarticletitle{Simple proofs of space-time and rational proofs of
  storage}. In \bibinfo{booktitle}{\emph{Advances in Cryptology--CRYPTO 2019:
  39th Annual International Cryptology Conference, Santa Barbara, CA, USA,
  August 18--22, 2019, Proceedings, Part I 39}}. Springer,
  \bibinfo{pages}{381--409}.
\newblock


\bibitem[Nakamoto(2008)]%
        {nakamoto2008bitcoin}
\bibfield{author}{\bibinfo{person}{Satoshi Nakamoto}.}
  \bibinfo{year}{2008}\natexlab{}.
\newblock \showarticletitle{Bitcoin: A peer-to-peer electronic cash system}.
\newblock \bibinfo{journal}{\emph{Decentralized business review}}
  (\bibinfo{year}{2008}), \bibinfo{pages}{21260}.
\newblock


\bibitem[Nicolas et~al\mbox{.}(2019)]%
        {nicolas2019comprehensive}
\bibfield{author}{\bibinfo{person}{Kervins Nicolas}, \bibinfo{person}{Yi Wang},
  {and} \bibinfo{person}{George~C Giakos}.} \bibinfo{year}{2019}\natexlab{}.
\newblock \showarticletitle{Comprehensive overview of selfish mining and double
  spending attack countermeasures}. In \bibinfo{booktitle}{\emph{2019 IEEE 40th
  Sarnoff Symposium}}. IEEE, \bibinfo{pages}{1--6}.
\newblock


\bibitem[Oliver et~al\mbox{.}(2017)]%
        {oliver2017proposal}
\bibfield{author}{\bibinfo{person}{Carlos~G Oliver},
  \bibinfo{person}{Alessandro Ricottone}, {and} \bibinfo{person}{Pericles
  Philippopoulos}.} \bibinfo{year}{2017}\natexlab{}.
\newblock \showarticletitle{Proposal for a fully decentralized blockchain and
  proof-of-work algorithm for solving NP-complete problems}.
\newblock \bibinfo{journal}{\emph{arXiv preprint arXiv:1708.09419}}
  (\bibinfo{year}{2017}).
\newblock


\bibitem[Park et~al\mbox{.}(2018)]%
        {park2018spacemint}
\bibfield{author}{\bibinfo{person}{Sunoo Park}, \bibinfo{person}{Albert Kwon},
  \bibinfo{person}{Georg Fuchsbauer}, \bibinfo{person}{Peter Ga{\v{z}}i},
  \bibinfo{person}{Jo{\"e}l Alwen}, {and} \bibinfo{person}{Krzysztof
  Pietrzak}.} \bibinfo{year}{2018}\natexlab{}.
\newblock \showarticletitle{Spacemint: A cryptocurrency based on proofs of
  space}. In \bibinfo{booktitle}{\emph{Financial Cryptography and Data
  Security: 22nd International Conference, FC 2018, Nieuwpoort, Cura{\c{c}}ao,
  February 26--March 2, 2018, Revised Selected Papers 22}}. Springer,
  \bibinfo{pages}{480--499}.
\newblock


\bibitem[Philippopoulos et~al\mbox{.}(2020)]%
        {philippopoulos2020difficulty}
\bibfield{author}{\bibinfo{person}{Pericles Philippopoulos},
  \bibinfo{person}{Alessandro Ricottone}, {and} \bibinfo{person}{Carlos~G
  Oliver}.} \bibinfo{year}{2020}\natexlab{}.
\newblock \showarticletitle{Difficulty Scaling in Proof of Work for
  Decentralized Problem Solving}.
\newblock \bibinfo{journal}{\emph{Ledger}}  \bibinfo{volume}{5}
  (\bibinfo{year}{2020}).
\newblock


\bibitem[Platt et~al\mbox{.}(2021)]%
        {9741872}
\bibfield{author}{\bibinfo{person}{Moritz Platt}, \bibinfo{person}{Johannes
  Sedlmeir}, \bibinfo{person}{Daniel Platt}, \bibinfo{person}{Jiahua Xu},
  \bibinfo{person}{Paolo Tasca}, \bibinfo{person}{Nikhil Vadgama}, {and}
  \bibinfo{person}{Juan~Ignacio Ibañez}.} \bibinfo{year}{2021}\natexlab{}.
\newblock \showarticletitle{The Energy Footprint of Blockchain Consensus
  Mechanisms Beyond Proof-of-Work}. In \bibinfo{booktitle}{\emph{2021 IEEE 21st
  International Conference on Software Quality, Reliability and Security
  Companion (QRS-C)}}. \bibinfo{pages}{1135--1144}.
\newblock
\urldef\tempurl%
\url{https://doi.org/10.1109/QRS-C55045.2021.00168}
\showDOI{\tempurl}


\bibitem[Rao et~al\mbox{.}(2020)]%
        {sym12111885}
\bibfield{author}{\bibinfo{person}{Yongsheng Rao}, \bibinfo{person}{Saeed
  Kosari}, \bibinfo{person}{Zehui Shao}, \bibinfo{person}{Ruiqi Cai}, {and}
  \bibinfo{person}{Liu Xinyue}.} \bibinfo{year}{2020}\natexlab{}.
\newblock \showarticletitle{A Study on Domination in Vague Incidence Graph and
  Its Application in Medical Sciences}.
\newblock \bibinfo{journal}{\emph{Symmetry}} \bibinfo{volume}{12},
  \bibinfo{number}{11} (\bibinfo{year}{2020}).
\newblock
\showISSN{2073-8994}
\urldef\tempurl%
\url{https://doi.org/10.3390/sym12111885}
\showDOI{\tempurl}


\bibitem[Ren(2019)]%
        {ren2019analysis}
\bibfield{author}{\bibinfo{person}{Ling Ren}.} \bibinfo{year}{2019}\natexlab{}.
\newblock \showarticletitle{Analysis of nakamoto consensus}.
\newblock \bibinfo{journal}{\emph{Cryptology ePrint Archive}}
  (\bibinfo{year}{2019}).
\newblock


\bibitem[Seshadhri et~al\mbox{.}(2012)]%
        {seshadhri2012community}
\bibfield{author}{\bibinfo{person}{Comandur Seshadhri},
  \bibinfo{person}{Tamara~G Kolda}, {and} \bibinfo{person}{Ali Pinar}.}
  \bibinfo{year}{2012}\natexlab{}.
\newblock \showarticletitle{Community structure and scale-free collections of
  Erd{\H{o}}s-R{\'e}nyi graphs}.
\newblock \bibinfo{journal}{\emph{Physical Review E}} \bibinfo{volume}{85},
  \bibinfo{number}{5} (\bibinfo{year}{2012}), \bibinfo{pages}{056109}.
\newblock


\bibitem[Syafruddin et~al\mbox{.}(2019)]%
        {syafruddin2019blockchain}
\bibfield{author}{\bibinfo{person}{Willa~Ariela Syafruddin},
  \bibinfo{person}{Sajjad Dadkhah}, {and} \bibinfo{person}{Mario K{\"o}ppen}.}
  \bibinfo{year}{2019}\natexlab{}.
\newblock \showarticletitle{Blockchain scheme based on evolutionary proof of
  work}. In \bibinfo{booktitle}{\emph{2019 IEEE Congress on Evolutionary
  Computation (CEC)}}. IEEE, \bibinfo{pages}{771--776}.
\newblock


\bibitem[Tasca and Tessone(2017)]%
        {tasca2017taxonomy}
\bibfield{author}{\bibinfo{person}{Paolo Tasca} {and}
  \bibinfo{person}{Claudio~J Tessone}.} \bibinfo{year}{2017}\natexlab{}.
\newblock \showarticletitle{Taxonomy of blockchain technologies. Principles of
  identification and classification}.
\newblock \bibinfo{journal}{\emph{arXiv preprint arXiv:1708.04872}}
  (\bibinfo{year}{2017}).
\newblock


\bibitem[Todorovi{\'c} et~al\mbox{.}(2022)]%
        {todorovic2022proof}
\bibfield{author}{\bibinfo{person}{Milan Todorovi{\'c}}, \bibinfo{person}{Luka
  Matijevi{\'c}}, \bibinfo{person}{Du{\v{s}}an Ramljak},
  \bibinfo{person}{Tatjana Davidovi{\'c}}, \bibinfo{person}{Dragan
  Uro{\v{s}}evi{\'c}}, \bibinfo{person}{Tatjana Jak{\v{s}}i{\'c}~Kr{\"u}ger},
  {and} \bibinfo{person}{{\DJ}or{\dj}e Jovanovi{\'c}}.}
  \bibinfo{year}{2022}\natexlab{}.
\newblock \showarticletitle{Proof-of-Useful-Work: BlockChain Mining by Solving
  Real-Life Optimization Problems}.
\newblock \bibinfo{journal}{\emph{Symmetry}} \bibinfo{volume}{14},
  \bibinfo{number}{9} (\bibinfo{year}{2022}), \bibinfo{pages}{1831}.
\newblock


\bibitem[Wang(2014)]%
        {wang2014domination}
\bibfield{author}{\bibinfo{person}{Guangyuan Wang}.}
  \bibinfo{year}{2014}\natexlab{}.
\newblock \emph{\bibinfo{title}{Domination problems in social networks}}.
\newblock \bibinfo{thesistype}{Ph.\,D. Dissertation}.
  \bibinfo{school}{University of Southern Queensland}.
\newblock


\bibitem[Xiao et~al\mbox{.}(2020)]%
        {xiao2020survey}
\bibfield{author}{\bibinfo{person}{Yang Xiao}, \bibinfo{person}{Ning Zhang},
  \bibinfo{person}{Wenjing Lou}, {and} \bibinfo{person}{Y~Thomas Hou}.}
  \bibinfo{year}{2020}\natexlab{}.
\newblock \showarticletitle{A survey of distributed consensus protocols for
  blockchain networks}.
\newblock \bibinfo{journal}{\emph{IEEE Communications Surveys \& Tutorials}}
  \bibinfo{volume}{22}, \bibinfo{number}{2} (\bibinfo{year}{2020}),
  \bibinfo{pages}{1432--1465}.
\newblock


\end{thebibliography}
\end{document}